\documentclass[10pt, journal]{IEEEtran}
\IEEEoverridecommandlockouts
\usepackage{cite}
\usepackage[binary-units,per-mode=symbol]{siunitx}
\usepackage{amsmath,amssymb,amsfonts}
\usepackage[ruled,vlined, linesnumbered]{algorithm2e}
\usepackage{graphicx}
\usepackage{subfigure}
\usepackage{textcomp}
\usepackage{xcolor}
\usepackage{dsfont}
\usepackage{thmtools, thm-restate}
\usepackage{amsthm}

\begin{document}
\title{DUNE: Improving Accuracy for Sketch-INT Network Measurement Systems}

\author{Zhongxiang~Wei,~\IEEEmembership{Student~Member,~IEEE,}
        Ye~Tian,~\IEEEmembership{Member,~IEEE,}
        Wei~Chen,~
        Liyuan~Gu,~
        and~Xinming~Zhang,~\IEEEmembership{Senior~Member,~IEEE}
\IEEEcompsocitemizethanks{\IEEEcompsocthanksitem The authors are with the Anhui Key Laboratory of High Performance Computing, School of Computer Science and Technology, University of Science and Technology of China, Hefei, Anui, 230026, China.\protect\\
E-mail: \{wz199758, szcw33, guliyuan1\}@mail.ustc.edu.cn, \{yetian, xinming\}@ustc.edu.cn
}
}
\maketitle

\begin{abstract}

In-Band Network Telemetry (INT) and sketch algorithms are two representative methodologies for measuring network traffics in real time.
To combine sketch with INT and to keep their advantages, the ``reconstructing sketch at end-host'' approach, which uses INT to send small pieces of switch sketch (i.e., \emph{sketchlet}) to end-host for reconstructing an identical sketch, is a promising direction. However, we reveal that the naive column sketchlet is not efficient, and inaccuracies arise because of the invalid and stale measurement data in the end-host reconstructed sketch.
In this paper, we present an innovative sketch-INT measurement system named \emph{DUNE}.
DUNE is lightweight by following the ``reconstructing the sketch at end-host'' approach, and to improve the measurement accuracy, we make two innovations: First, we design a novel sketchlet named \emph{scatter sketchlet} that is more efficient in transferring measurement data by allowing a switch to select individual sketch buckets to add to sketchlet; Second, we develop data structures for tracing ``freshness'' of the sketch buckets, and present algorithms for smartly selecting buckets that contain valuable measurement data to send to end-host. We theoretically prove that our proposed methods are superior comparing with the existing solution, and implement a prototype on the commodity Barefoot Tofino switches. We conduct extensive experiments on DUNE, and the evaluation results show that the system considerably improves measurement accuracies at negligible costs. In particular,
with less than $0.36\%$ loss of the packet forwarding rate, DUNE avoids up to $60\%$ errors in the end-host reconstructed sketch.

\end{abstract}

\begin{IEEEkeywords}
Network measurement, Sketch, In-band Network Telemetry (INT), Sketchlet, Programmable switch.
\end{IEEEkeywords}

\section{Introduction}

\IEEEPARstart{T}oday's production network is composed of a large number and variety of network elements including routers, switches, and middleboxes. In such a network, faults and errors may arise from any single or combination of these elements, therefore, how to monitor network health in real time is a critical problem in network management.

With the advances of software-defined networking (SDN) and data plane programmability, a number of measurement-based solutions have been proposed for troubleshooting networks in recent years. Among them, one promising direction is In-band Network Telemetry (INT) \cite{JeyakumarINT14}\cite{KimINTdemo15}\cite{TammanaPath16}\cite{INTSpec20}. In INT, a programmable switch piggybacks measurement data in packet header, and sends them to end-host for analysis. The benefit of INT is its accuracy, as per-packet information is collected. However, by carrying measurement data, INT consumes extra bandwidth, thus considerably impacts an INT flow's goodput and completion time \cite{KimSelectiveINT18}\cite{ShengDeltaINT21}\cite{SongINTlabel21}\cite{BasatPINT20}.

Another promising direction is to develop measurement systems based on sketches \cite{YuOpenSketch13}\cite{LiFlowRadar16}\cite{YangElastic18}\cite{YangDiamond19}\cite{LiuSFSketch20}\cite{ZhangCocoSketch21}\cite{HuangSketchVisor17}\cite{HuangCompressiveSense21}\cite{LiuUnivMon16}. In such a system, a probabilistic data structure, namely \emph{sketch}, is maintained by switch for aggregating per-packet information. A sketch-based system is flexible as it provides a tradeoff between accuracy and resource usage. However, to send sketches to analyzers, an out-of-band channel is required, which either demands a dedicated channel \cite{YuOpenSketch13}\cite{LiFlowRadar16}, or is greatly impacted by the available shared bandwidth \cite{YangElastic18}.

Since both INT and sketch-based methods have pros and cons, people start to consider combining them and keep their advantages. There are two representative approaches. The first approach, with SketchINT \cite{YangSketchINT21} as an example, is to ``construct sketch at end-host''. In such an approach, a network switch piggybacks packet-level information using INT, and when receiving an INT packet, the end-host aggregates the data into a group of sketches. A benefit of this approach is that unlike the switch with limited programmability, an end-host can maintain novel and complex sketch structures that are difficult to be implemented on switch hardwares.

The second approach, with LightGuardian \cite{ZhaoLightGuardian21} as an example, is to ``reconstruct sketch at end-host''. In this approach, a sketch is maintained by a programmable switch, and exploits the switch's visibility to trace per-flow traffic statistics. Moreover, the switch splits the sketch structure into many small pieces, called \emph{sketchlets}, and an INT flow carries the sketchlets to end-host, which resembles them to reconstruct a sketch that has an identical structure as the one on the switch.
By making use of INT, out-band channel is no long required, moreover, with the abundant computation and memory resources, an end-host can maintain many sketch instances for tracing traffic characteristics over a long time, and provide computational-intensive query services.


Unfortunately both approaches have their limitations. The ``constructing sketch at end-host'' approach achieves high accuracy, as end-host directly collects packet-level information, but it does not avoid the large network overhead of INT. The ``reconstructing sketch at end-host'' approach effectively reduces the INT bandwidth usage, as a sketchlet contains flow-level statistics that have already been aggregated by the switch sketch. However, as we will see in this paper, the reconstructed sketch is not as accurate as the switch sketch, for the reason that the two sketches are not timely synchronized, and when data in the reconstructed sketch is invalid or stale, inaccuracies are introduced.

In this paper, we present a sketch-INT network measurement system named \emph{DUNE}. We have three design objectives:

\begin{itemize}
  \item \emph{Lightweight}: The system should be lightweight regarding network overhead by following the ``reconstructing sketch at end-host'' approach in combining sketch with INT.
  \item \emph{Accurate}: The system should provide accurate measurement results by reducing the inaccuracies caused by invalid and stale data in the end-host reconstructed sketch.
  \item \emph{Practical}: The system should be practical regarding realization. In particular, the key components of the system should be designed under the constraints of the RMT (Reconfigurable Match-Action Tables) programmable switches (e.g., the Intel Tofino switch) \cite{BosshartRMT13}\cite{CholedRMT17}.
\end{itemize}

We make four contributions to fulfill the objectives in this paper:
\begin{itemize}
  \item We follow the ``reconstructing sketch at end-host'' approach by using INT to send sketchlets to end-host. Moreover, we present a novel sketchlet design named \emph{scatter sketchlet}, which allows a switch to select individual sketch buckets to add to sketchlet. We prove in theory that the scatter sketchlet is more efficient in transferring measurement data to end-host than the existing approach.
  \item We develop data structures for tracing ``freshness'' of the sketch buckets, and present algorithms for selecting buckets that contain valuable measurement data to sketchlets. We theoretically prove that our proposed methods achieve the desired property in selecting sketch buckets at the frequencies that are proportional to their update frequencies.
  \item We realize our proposed data structures and algorithms on the P4 Tofino switch under the device's strict register access constraints, and make our implementation open-source.
  \item We carry out extensive evaluations on both the software and hardware implementations, and find that DUNE significantly improves measurement accuracy at negligible cost. In particular, DUNE avoids $60\%$ measurement errors, but only slightly reduces a Tofino switch's forwarding rate less than $0.36\%$.
\end{itemize}

The remainder part of this paper is organized as the following. We discuss the related works in Sec. \ref{Sec_Related}. Sec. \ref{Sec_MotiveOverview} explains our motivation and presents an overview of the DUNE system. Sec. \ref{Sec_Scatter} presents the design and analysis of the scatter sketchlet. We present the sketch bucket selection algorithms in Sec. \ref{Sec_SelectAlg} and describe the prototype implementation in Sec. \ref{Sec_Implement}. Sec. \ref{Sec_Eval} discusses the evaluation results and we conclude in Sec. \ref{Sec_Conclude}.

\section{Related Work}\label{Sec_Related}

Conducting comprehensive measurements for monitoring large-scaled networks is challenging. In the following, we introduce the major categories of representative solutions. For other solutions, please refer to references \cite{HandigolNetSight14}\cite{NarayanaPath16}\cite{TammanaSwitchPointer18}\cite{Sonchack*Flow18}\cite{HuangOmniMon20}.

\noindent\textbf{Sampling-based solutions.}
In sampling-based measurement systems such as NetFlow \cite{NetFlow}, sFlow \cite{sFlow}, and Everflow \cite{ZhuPacketLevel15}, filtering rules are set up to collect traffics that satisfy certain conditions for analysis. Although sampling-based methods have been successfully applied for decades, however, it is doubtful whether accurate results can be derived from only a subset of the traffic.

\noindent\textbf{Probing-based solutions.}
Probing-based approaches are also widely used. For example, Pingmesh \cite{GuoPingmesh15} analyzes abnormals based on a probe-based latency measurement. NetBouncer \cite{TanNetBouncer19} detects device and link failures by actively probing paths in data center networks. The limitation of the probing-based method is that only the probe traffic is measured.

\noindent\textbf{In-band Network Telemetry (INT)-based solutions.}
With the advances of software-defined networking (SDN) and programmable data plane, In-band Network Telemetry (INT), which collects per-packet information with ordinary network flows, becomes a promising direction in recent years. Over the OpenFlow data plane, PathDump \cite{TammanaPath16} traces per-packet trajectories and provide a set of APIs for analyzers to debug networks. Jeyakumar et al. \cite{JeyakumarINT14} propose to allow programmable switches to execute ``tiny packet programs'' (TPPs) embedded in packets to collect per-packet network states.
Kim et al. \cite{KimINTdemo15} demonstrate that INT can be realized on the P4-programmable data plane, and \texttt{P4.org} develops the technical specification for supporting INT over the P4 data plane \cite{INTSpec20}. A major concern of INT is that by carrying measurement data in packet header, INT consumes considerable extra bandwidth, and its overhead increases with network size. To reduce the overhead, Kim et al. \cite{KimSelectiveINT18} propose to adjust the insertion ratio to carry only the significant changes of the monitored network states. Sheng et al. \cite{ShengDeltaINT21} present DeltaINT, which reduces INT overhead by selectively carrying network states only when their values change significantly. Song et al. \cite{SongINTlabel21} propose to insert device states to packet headers based on dynamically adjusted intervals. Basat et al. \cite{BasatPINT20} propose PINT, which applies various probabilistic techniques to encode measurement data on multiple packets, so as to reduce the per-packet INT overhead.

\noindent\textbf{Sketch-based solutions.}
Sketch-based method is another promising direction. In a sketch-based system, usually a probabilistic data structure, namely sketch, is maintained within a switch for aggregating per-packet information. Representative sketches include bitmap \cite{EstanBitmap06}, hashing table \cite{SongHashtable05}, count-min \cite{CormodeCM05}, Bloom filter \cite{BonomiBF06}, and their variants. Many works focus on generalizing and optimizing sketch algorithms.
Yang et al. \cite{YangElastic18} present a generic sketch named Elastic Sketch that identifies and differentiates large flows from small ones, and is adaptive to traffic variances. To adapt to skewed network flows, Yang et al. \cite{YangDiamond19} propose a novel sketch, namely the Diamond sketch, to dynamically assign appropriate amount of resources to each flow on demand. Liu et al. \cite{LiuSFSketch20} propose a new sketch called the Slim-Fat (SF) sketch that improves high accuracy without sacrificing the update and query speed.
Song et al. \cite{SongFCM20} propose a tree-based sketch structure named FCM-sketch as a more accurate and memory-efficient replacement of the count-min sketch.
Zhang et al. \cite{ZhangCocoSketch21} design a structure named CocoSketch that is capable to support partial key queries. Huang et al. \cite{HuangSketchVisor17}\cite{HuangCompressiveSense21} apply compressive sensing to recover measurement results from errors. For optimizing sketch-based measurement systems from a network-wide view, OpenSketch \cite{YuOpenSketch13} derives the sketch parameters on individual switches by solving an optimization problem; UnivMon \cite{LiuUnivMon16} dispatches measurement tasks to sketches hosted on different switches by solving an integer programming problem.

\noindent\textbf{Combining Sketch with INT.}
To combine sketch with INT, Yang et al. \cite{YangSketchINT21} design a novel sketch named TowerSketch, and use it at the network edge to aggregate per-packet INT information. Zhao et al. \cite{YangSketchINT21} design a novel sketch named SuMax on switch, and divide it into small-sized sketchlets to send to end-host using INT; on receiving the sketchlets, the end-host reconstructs a sketch that has an identical structure as the one on the switch, to provide query services.

\section{Motivation and System Overview} \label{Sec_MotiveOverview}

\subsection{Motivation}\label{Sec_Motivation}

In this work, we focus on the lightweight ``reconstructing sketch at end-host'' approach, and aim to improve its measurement accuracy. Our work is inspired by LightGuardian \cite{ZhaoLightGuardian21}, which we briefly introduce as the following.

In LightGuardian, a programmable switch maintains a sketch structure named SuMax, which can be viewed as a modified count-min sketch composed of $w$ columns and $d$ rows of buckets.
A sketchlet in LightGuardian is simply a column of the SuMax sketch. When a switch receives an INT packet, it randomly selects a sketch column as a sketchlet, and embedded it into the packet header to send to end-host.

\begin{figure}
  \centering
  \includegraphics[width=3.5in]{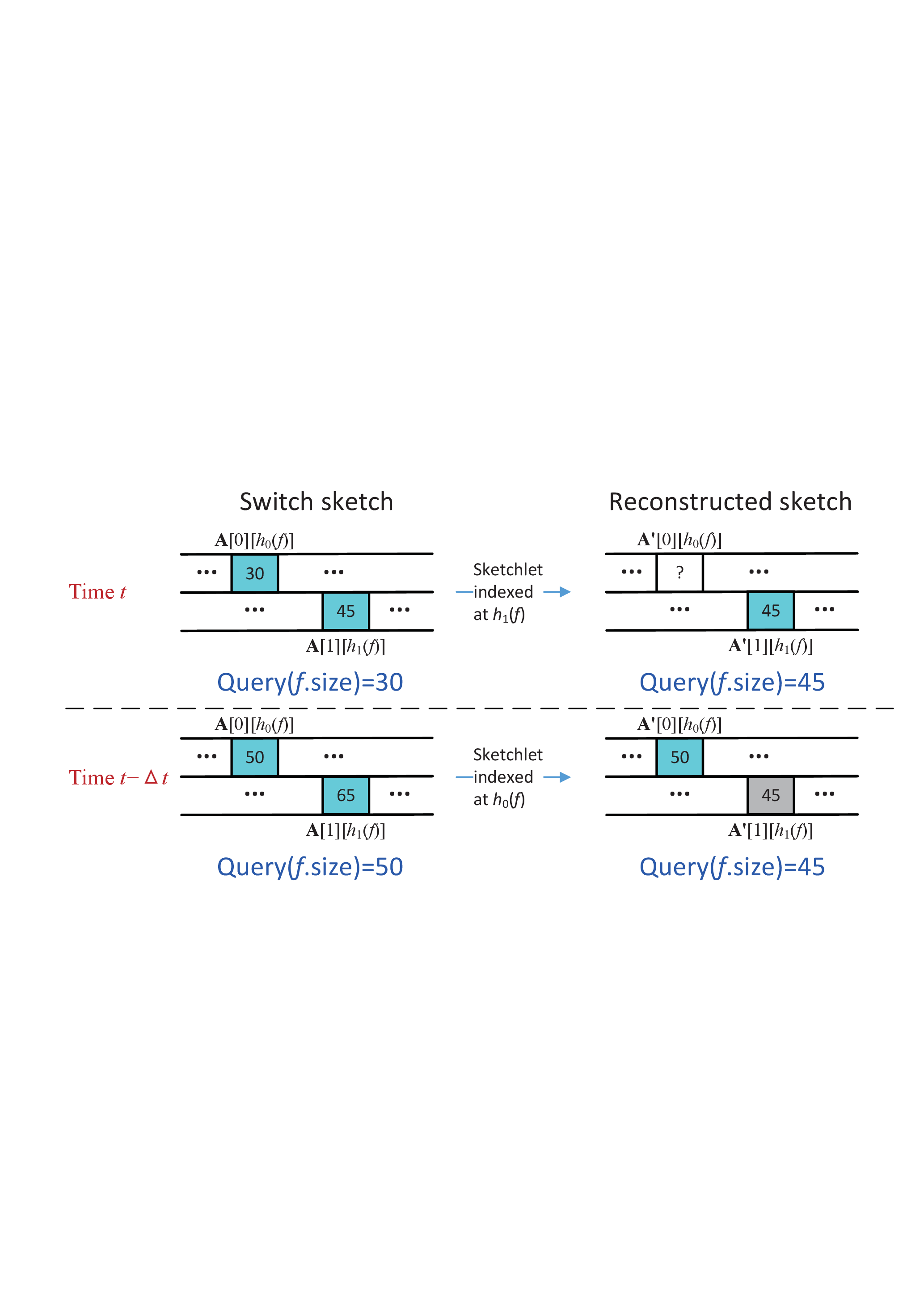}\\
  \caption{An example demonstrating the reasons behind the inaccuracies of an end-host reconstructed sketch comparing with the original switch sketch.}
  \label{Fig_Motivation}
\end{figure}

We use an example in Fig. \ref{Fig_Motivation} to demonstrate why comparing with the original switch sketch, an end-host sketch reconstructed from sketchlets is inaccurate. Suppose that the sketch is composed of $d=2$ rows, and it traces flow size in number of packets or bytes. As shown in the top figure, at time $t$, if we estimate flow $f$'s size with the switch sketch $\mathbf{A}$,  the result should be $\min\{\mathbf{A}[0][h_0(f)], \mathbf{A}[1][h_1(f)]\}=30$. However, suppose that by time $t$, only the column indexed at $h_1(f)$ has been sent to the end-host by INT, then on the reconstructed sketch $ \mathbf{A'}$, its bucket $\mathbf{A'}[0][h_0(f)]$ is invalid, as it does not contain any valid measurement data. If we estimate $f$'s size with $ \mathbf{A'}$, the result would be $\min\{\mathbf{A'}[0][h_0(f)], \mathbf{A'}[1][h_1(f)]\}=45$, which is overestimated and inaccurate, due to the invalid data in the bucket of $\mathbf{A'}[0][h_0(f)]$\footnote{According to \cite{ZhaoLightGuardian21}, the SuMax algorithm does not take an invalid bucket into the estimation when answering a query.}.

After $\Delta t$ seconds, as the flow grows, both $\mathbf{A}[0][h_0(f)]$ and $\mathbf{A}[1][h_1(f)]$ are increased by $20$. We assume that the column indexed at $h_0(f)$ has just been sent to the end-host, but the column at $h_1(f)$ has not been sent again during $[t,t+\Delta t]$, as demonstrated in the bottom figure. At time $t+\Delta t$, querying flow $f$ with the switch sketch $\mathbf{A}$ returns $50$, but querying at the reconstructed sketch $\mathbf{A'}$ returns $45$, which is underestimated and inaccurate, due to the stale data in $\mathbf{A'}[1][h_1(f)]$.

From the example, one can see that the cause of the errors in a reconstructed sketch is the invalid and stale data in its buckets. For effectively combining sketch and INT, it is essential to eliminate these errors.


\subsection{System Overview}

\begin{table}
  \centering
  \caption{Frequently used notations.}\label{Tbl_Denotation}
    \begin{tabular}{c|l}
        \hline
        \textbf{Denotation} & \textbf{Meaning} \\ \hline
        $\mathbf{A}$ & Switch sketch \\
        $\mathbf{A'} $ & End-host reconstructed sketch \\
        $\mathbf{B}$ & Bitmap \\
        $\mathbf{C}$ & Cookie \\
        $d$ & Num. of sketch/bitmap/Cookie rows \\
        $w$ & Num. of sketch/bitmap/Cookie columns \\
        $c$ & Sketch bucket size in bits \\
        $r$ & Scatter sketch's offset length in bits \\
        $b$ & Cookie cell size in bits \\
        $N$ & Num. of network flows traced by a switch sketch \\\hline
    \end{tabular}
\end{table}


We present \emph{DUNE}, a lightweight and accurate sketch-INT network measurement system. DUNE is lightweight as it follows the ``reconstructing sketch at end-host'' approach. To avoid measurement errors, we make two key innovations: The first is a novel sketchlet design named \emph{scatter sketchlet}.
Unlike the \emph{column sketchlet} in LightGuardian \cite{ZhaoLightGuardian21} that contains an entire sketch column, a scatter sketchlet enables a switch to select individual buckets from each row of the sketch to add to sketchlet.
We further prove in theory that unless the sketch is extremely crowded, a scatter sketchlet has a higher efficiency in transferring measurement data to end-host comparing with a column sketchlet.
We present the detailed design and analysis in Sec. \ref{Sec_Scatter}.

The second innovation is a family of methods for selecting sketch buckets to sketchlets. We develop a \emph{bitmap} data structure that traces the update status of the sketch buckets, and present a bitmap-based bucket selection algorithm.
We also develop a counter array structure named \emph{Cookie} for tracing ``freshness'' of the sketch buckets, and present a Cookie-based algorithm that selects buckets containing valuable measurement data to add to sketchlets. We implement both algorithms on the P4-programmable Tofino switch, and we also propose a Cookie-based algorithm for software switches. We present the algorithms in Sec. \ref{Sec_SelectAlg}, and describe the implementations on the Barefoot Tofino switch in Sec. \ref{Sec_Implement}. Table \ref{Tbl_Denotation} lists the frequently used notations in this paper.

\section{Scatter Sketchlet}\label{Sec_Scatter}

\begin{figure}
  \centering
  \includegraphics[width=3.2in]{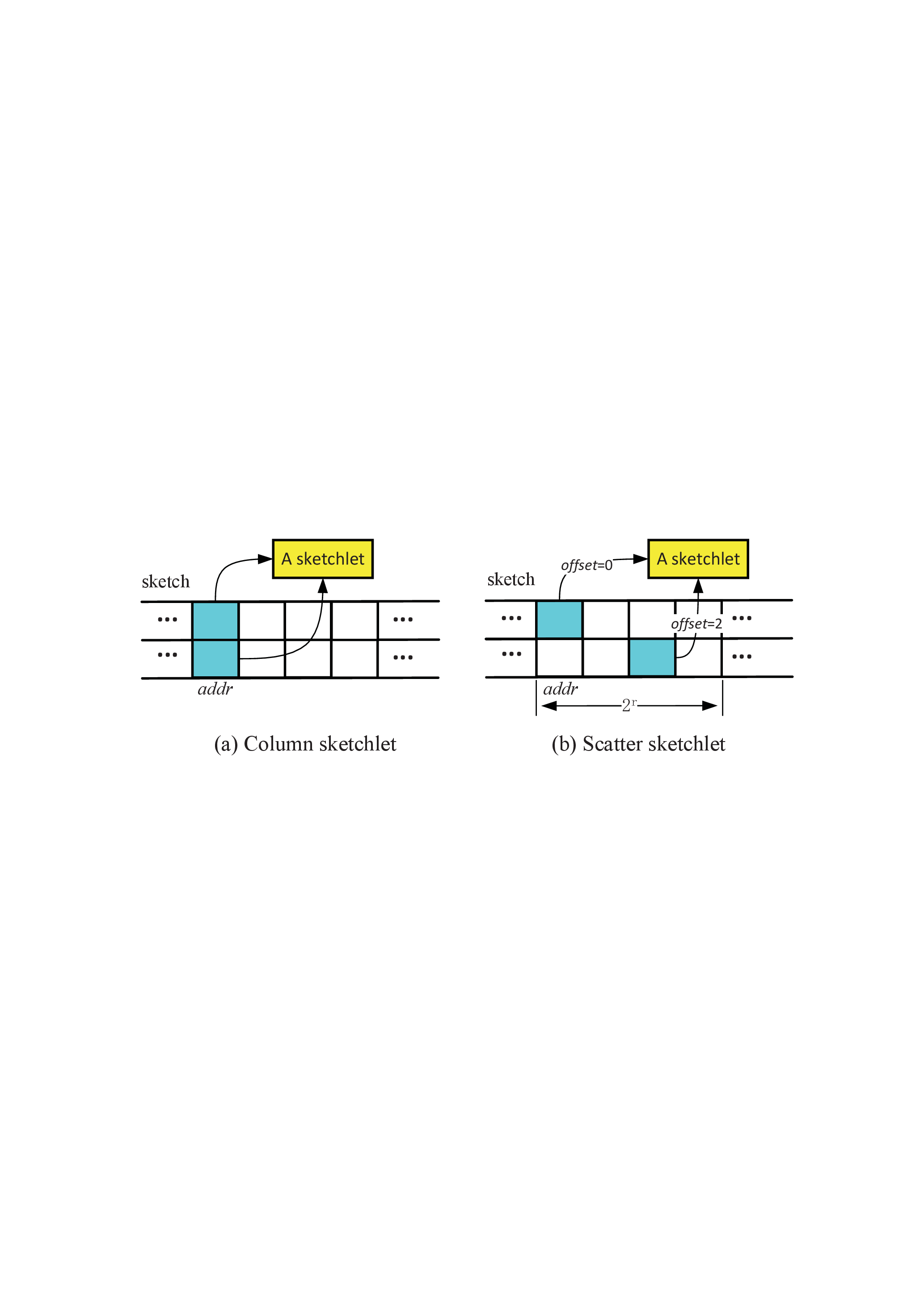}\\
  \caption{A comparison between column sketchlet and scatter sketchlet.}
  \label{Fig_Sketchlet}
\end{figure}

\subsection{Sketchlet Design}

Before presenting our sketchlet design, we first describe how a sketch is realized and how a sketchlet is formed on a Tofino switch. A Tofino switch processes network packets with a pipeline, which is composed of a series of match-action unit (MAU) stages. Each MAU stage has a stage-local memory, and stateful elements such as sketch buckets are stored as \emph{registers} in the memory. A register can be accessed at most once by a packet in its pipeline pass, moreover, a register access is limited to one simple read-update-write operation that must be realized in a small piece of code called \emph{register action}.
In a sketch-based measurement system,
a $d\times w$ sketch is generally realized as $d$ registers, each contains a row of $w$ buckets, and different registers are placed in different stage-local memories on a Tofino switch. When an INT packet enters into the pipeline, it sequentially accesses the registers, and retrieves the bucket at the specified column index from each register to form a column sketchlet.

We propose a novel sketchlet design named \emph{scatter sketchlet}. As shown in Fig. \ref{Fig_Sketchlet}, for a $d\times w$ switch sketch $\mathbf{A}$, a scatter sketchlet contains $d$ buckets, one from each sketch row. A scatter sketchlet is addressed as $(addr,offset[1\cdots d] )$, where $addr\in \{1,\cdots, w\}$ is a column index, and $offset[i]$ is an $r$-bit offset indicating the distance between the column index of the bucket in the $i^{th}$ row and $addr$. For example, $(addr, offset[i])$ points to the sketch bucket $\mathbf{A}[i][(addr+offset[i])]$.
Note that the column sketchlet can be viewed as a special case of the scatter sketchlet with $r=0$.

Scatter sketchlet doesn't violate the Tofino switch's memory access restriction, as one register, which implements a sketch row, is still accessed at most once. The only difference is that we allow a bucket to be selected from a range of $[addr,(addr+2^r-1)]$ rather than at a fixed $addr$. We analyze the advantage of the design in the following subsection.

\subsection{Bit Efficiency}


The design of the scatter sketchlet allows its sketch bucket to be selected from a range $[addr, (addr+2^r-1)]$, which greatly reduces the chance that an invalid bucket is selected. Formally, consider a $w\times d$ sketch tracing a total number of $N$ network flows, a bucket is invalid only when none of the $N$ flows is hashed to it, which happens at a probability of $\left( 1-\frac{1}{w}\right) ^{N}\approx e^{-\frac{N}{w}}$. The probability that there exists at least one valid bucket in the range $[addr, (addr+2^r-1)]$ is $\left( 1-e^{-\frac{N}{w}\times 2^{r}}\right)$, and among the $d$ buckets in a scatter sketchlet, averagely $\left( 1-e^{-\frac{N}{w}\times 2^{r}}\right)\times d$ of them contain valid measurement data.



Besides the buckets, a sketchlet also needs to carry the addresses of its contained buckets for the end-host to reconstruct the sketch. For a column sketchlet, its address is simply the $\log_{2} w$-bit column index, but for a scatter sketchlet, in addition to the $\log_{2}w$-bit $addr$, the address also contains the $d$ offsets, and the total size is $(\log_{2} w+d\times r)$ bits.

We define \emph{bit efficiency}, which is the ratio between the bits of the valid measurement data in a sketchlet and the total sketchlet size, to measure the efficiency of a sketchlet in transferring measurement data to end-host. From the above analysis, it is easy to see that a scatter sketchlet's bit efficiency is
\begin{equation}
E=\frac{\left( 1-e^{-\frac{N}{w}\times 2^{r}}\right)\times d\times c}{d\times c+\log _{2}w+d\times r} \label{Equ_Efficiency}
\end{equation}
where $c$ is the size of a sketch bucket in bits. Note that a column sketchlet's bit efficiency can be obtained by applying $r=0$ to Eq. (\ref{Equ_Efficiency}). For comparing the bit efficiencies of the two sketchlet designs, we have the following result.

\begin{restatable}{theorem}{Upper}\label{Thm_Upper}
 As long as the number of the network flows $N$ traced by a sketch satisfies
\begin{equation}
N< w\ln \left( \frac{d\times c+\log _{2}w}{d\times r}\right) \label{Equ_Upper}
\end{equation}
a scatter sketchlet with $r\geq 1$ achieves a higher bit efficiency than a column sketchlet.
\end{restatable}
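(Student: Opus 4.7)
The plan is to start from the target inequality $E_{\text{scatter}} > E_{\text{col}}$ directly, using Eq.~(\ref{Equ_Efficiency}) for the scatter side and Eq.~(\ref{Equ_Efficiency}) with $r = 0$ for the column side, and then reduce it to a transparent algebraic condition on $N$. After canceling the common positive factor $dc$ from the numerators, the inequality to establish is
$$\frac{1 - e^{-\frac{N}{w} 2^r}}{dc + \log_2 w + dr} > \frac{1 - e^{-N/w}}{dc + \log_2 w}.$$

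The key algebraic move I would make next is the geometric-series factorization
$$1 - e^{-\frac{N}{w} 2^r} \;=\; \bigl(1 - e^{-N/w}\bigr) \sum_{k=0}^{2^r - 1} e^{-kN/w},$$
which exposes the shared factor $1 - e^{-N/w}$. Since $N > 0$, that factor is strictly positive and can be divided out. Cross-multiplying the two positive denominators, moving the $k=0$ term (which equals $1$) across, and canceling $dc + \log_2 w$ from both sides reduces the problem to the equivalent inequality
$$\sum_{k=1}^{2^r - 1} e^{-kN/w} \;>\; \frac{dr}{dc + \log_2 w}.$$

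To obtain the stated closed-form bound, I would then lower-bound this sum by its largest (first) term. Because $r \geq 1$ guarantees at least one summand and every summand is positive, a sufficient condition is $e^{-N/w} > dr/(dc + \log_2 w)$. Taking logarithms (the right-hand side is positive since $dc + \log_2 w > dr$ in the regime of interest) inverts directly to $N < w \ln\bigl((dc + \log_2 w)/(dr)\bigr)$, which is exactly Eq.~(\ref{Equ_Upper}). This establishes the sufficiency claim made in the theorem.

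I do not foresee a serious technical obstacle. The only nontrivial step is recognizing the geometric factorization $1 - x^{2^r} = (1-x)(1 + x + \cdots + x^{2^r - 1})$ applied with $x = e^{-N/w}$; without it the inequality looks transcendental and resists clean manipulation. A minor subtlety worth flagging in the write-up is that dropping all but the first term of $\sum_{k=1}^{2^r - 1} e^{-kN/w}$ makes the bound a sufficient (not necessary) condition, so the theorem's threshold is conservative; keeping the full sum would yield a tighter but less interpretable threshold, which is why I would present only the simpler form as in the theorem statement.
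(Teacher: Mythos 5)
Your proposal is correct and follows essentially the same route as the paper: reduce the bit-efficiency comparison to the inequality $\frac{e^{-N/w}-e^{-N2^{r}/w}}{1-e^{-N/w}}>\frac{dr}{dc+\log_{2}w}$, lower-bound the left side by $e^{-N/w}$ (valid for $r\geq 1$), and invert via the logarithm to get Eq.~(\ref{Equ_Upper}). Your geometric-series factorization $1-x^{2^{r}}=(1-x)\sum_{k=0}^{2^{r}-1}x^{k}$ with $x=e^{-N/w}$ is just a more transparent derivation of the intermediate bound that the paper asserts without justification, so it is a presentational improvement rather than a different argument.
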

\begin{proof}
 A scatter sketchlet achieves a higher bit efficiency than a column sketchlet only when
\[
\frac{\left( 1-e^{-\frac{N}{w}\times 2^{r}}\right) \times d\times c}{d\times
c+\log _{2}w+d\times r} > \frac{\left( 1-e^{-\frac{N}{w}}\right) \times
d\times c}{d\times c+\log _{2}w}
\]%
which is equivalent to
\[
\frac{e^{-\frac{N}{w}}-e^{-\frac{N}{w}\times 2^{r}}}{1-e^{-\frac{N}{w}}} >
\frac{d\times r}{d\times c+\log _{2}w}
\]%
Note that for $r\geq1$,
\[
\frac{e^{-\frac{N}{w}}-e^{-\frac{N}{w}\times 2^{r}}}{1-e^{-\frac{N}{w}}}\geq e^{-%
\frac{N}{w}}
\]
When $N<w\ln \left( \frac{d\times c+\log _{2}w}{d\times r}\right) $, we have $e^{-%
\frac{N}{w}}>\frac{d\times r}{d\times c+\log _{2}w}$, which leads to
\[
\frac{e^{-\frac{N}{w}}-e^{-\frac{N}{w}\times 2^{r}}}{1-e^{-\frac{N}{w}}}>%
\frac{d\times r}{d\times c+\log _{2}w}
\]
 \end{proof}

\begin{figure}[tbp]
  \centering
  \subfigure[Upper bound number of flows]{\includegraphics[height=1.30in]{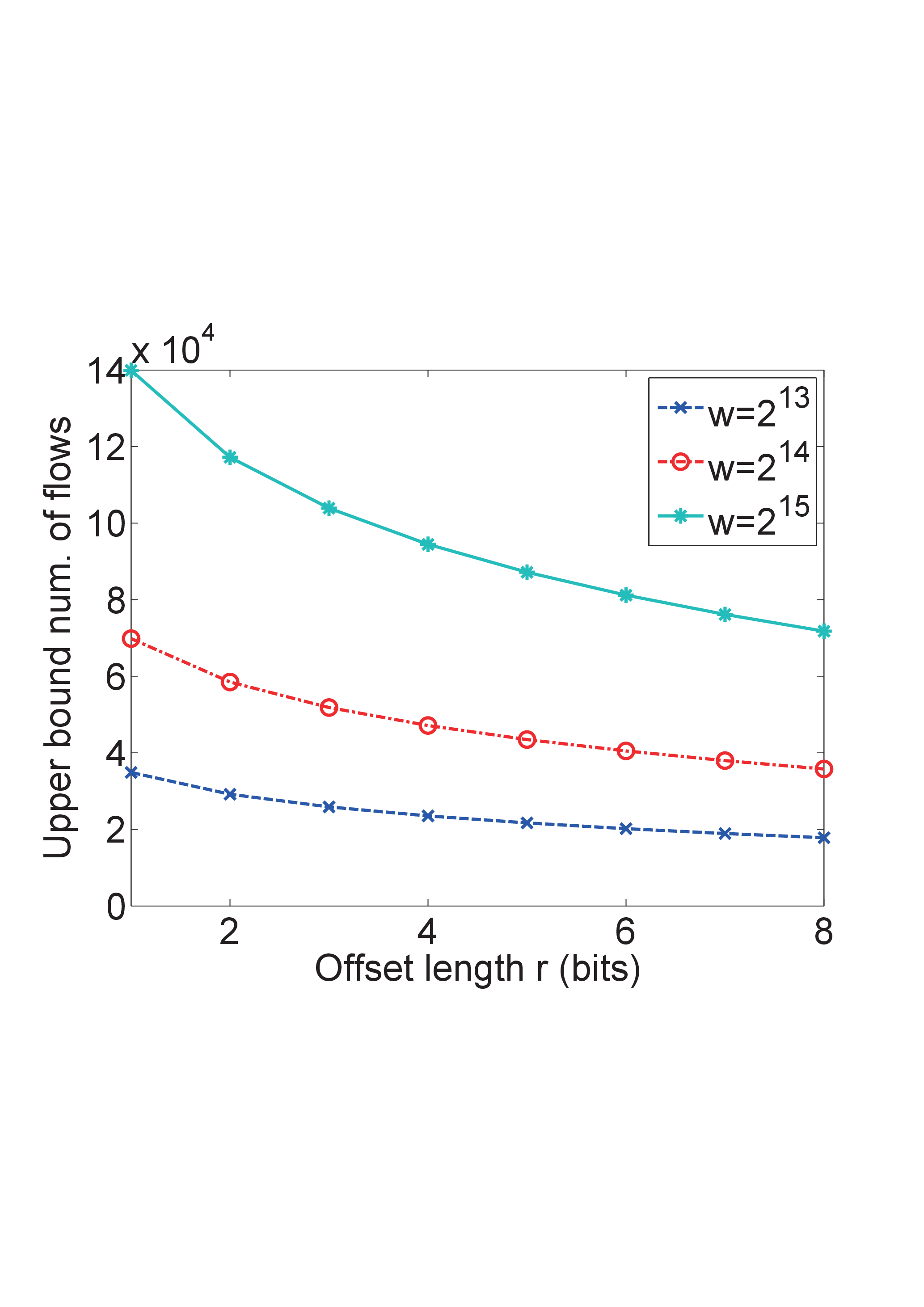}}
  \subfigure[Hash collision probability]{\includegraphics[height=1.30in]{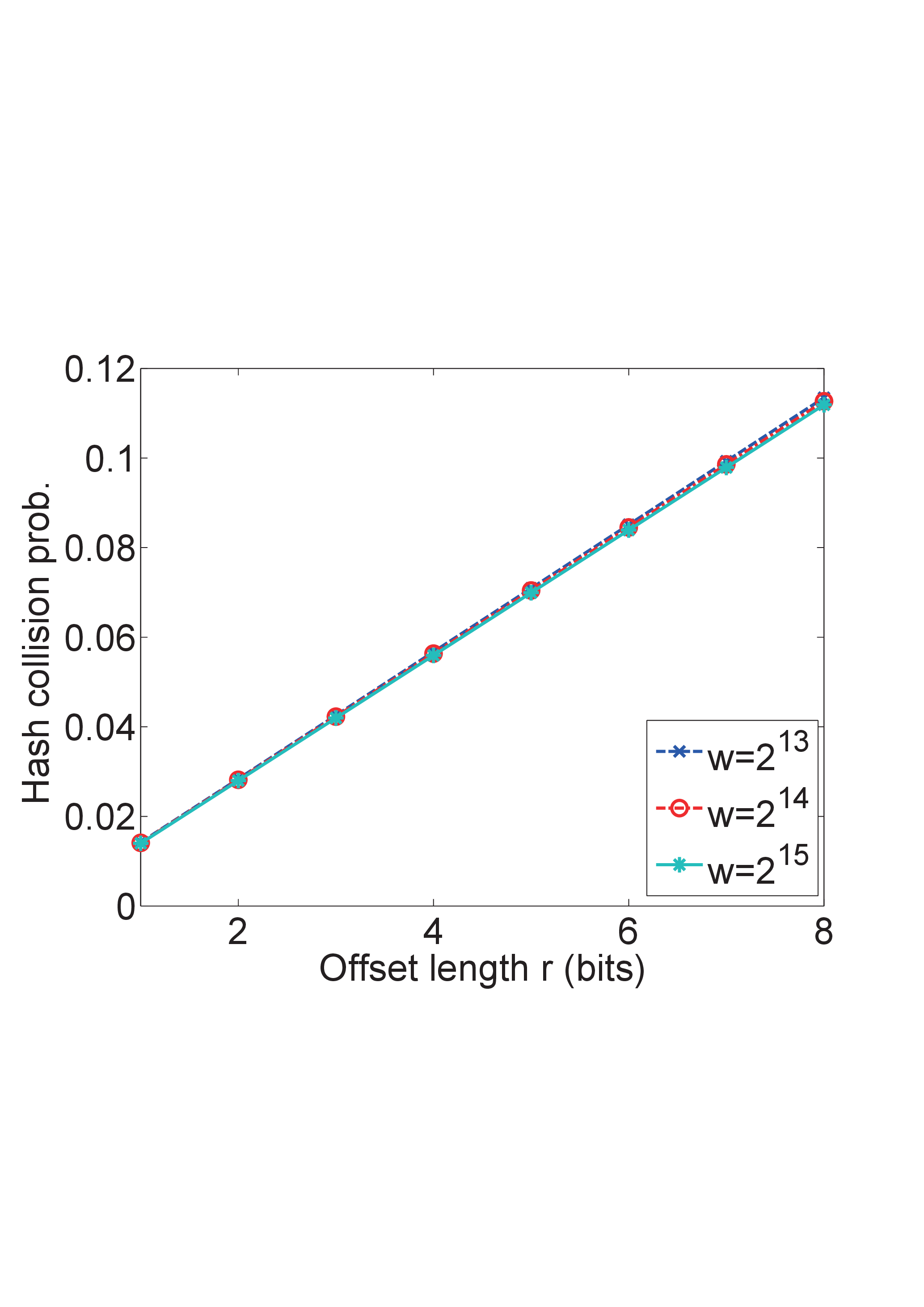}}
  \caption{(a) Upper bound number of flows that a scatter sketchlet can achieve a higher bit efficiency than a column sketch, and (b) hash collision probability of a sketch when tracing the upper bound number of flows. The sketch has $d=2$ rows and various number $w$ of columns, bucket size is $c=64$ bits, and the offset length $r$ of the scatter sketchlet varies from $1$ to $8$ bits.}
  \label{Fig_BitEfficiency}
\end{figure}

In Fig. \ref{Fig_BitEfficiency}(a), we plot the upper bound flow numbers in Eq. (\ref{Equ_Upper}) under various sketch sizes and scatter sketchlet offset lengths, and Fig. \ref{Fig_BitEfficiency}(b) presents the sketch's hash collision probabilities when tracing the upper bound numbers of flows.
We can see that a scatter sketchlet achieves a higher bit efficiency, even though the number of the flows traced by the sketch far exceeds the sketch's total number of buckets. In addition, when tracing the upper bound number of flows, the sketch is indeed overly crowded and has very high hash collision probabilities, therefore can no longer provide accurate measurement results.
In other words, \emph{comparing with the column sketchlet, our proposed scatter sketchlet is more efficient in transferring measurement data to end-host.}




\section{Bucket Selecting Algorithm}\label{Sec_SelectAlg}

Our proposed scatter sketchlet allows a switch to pick a sketch bucket in a range of $[addr, (addr+2^r-1)]$ to add to sketchlet, however, how to select the bucket that contains valuable measurement data is still unknown. In this section, we present algorithms for selecting sketch buckets.

\subsection{Bitmap Algorithm}

The first algorithm we propose is called \emph{bitmap algorithm}.
As its name suggests, the algorithm maintains within programmable switch a bitmap $\mathbf{B}$, which has same logical structure as the sketch with $d$ rows and $w$ columns of bits. Initially, all bits are set as $0$.

\begin{algorithm}\label{Alg_Bitmap}
\SetKwFunction{algo}{}
\SetKwInOut{Input}{Input}\SetKwInOut{Output}{Output}
\small{
\Input{A packet of flow $f$}
    \If{{\upshape $f$ is an ordinary flow}}{
        \For{$i=1\cdots d$}{
            Update $\mathbf{A}[i][h_i(f)]$\;
            $\mathbf{B}[i][h_i(f)]\leftarrow 1$\;
        }
    }
    \If{{\upshape $f$ is the INT flow}}{
        Randomly select $addr$ from $\{1,\cdots ,w\}$\;
        \For{$i=1\cdots d$}{
            \For{$j=0\cdots 2^r-1$}{
                \If{$\mathbf{B}[i][addr+j]==1$}{
                    break\;
                }
            }
            $\mathbf{B}[i][addr+j]\leftarrow 0$\;
            Add $\mathbf{A}[i][addr+j]$ to scatter sketchlet\;
        }
    }
}
\caption{\emph{Bitmap} algorithm}
\end{algorithm}

As presented in Algorithm \ref{Alg_Bitmap}, a bitmap $\mathbf{B}$ in a switch is updated on two events:
\begin{itemize}
  \item When the switch receives a packet of flow $f$, in addition to update the mapped sketch buckets $\mathbf{A}[i][h_i(f)]$, the switch also sets all the bits at the same positions of the bitmap as $1$, i.e., $\mathbf{B}[i][h_i(f)]=1$, for $i=1,\cdots,d$ (line 1-4).
  \item When receiving an INT flow packet, the switch randomly selects $addr$ from $\{1,\cdots, w\}$ (line 6), and from each row of the bitmap, it finds the first bit in the range of $[addr, (addr+2^r-1)]$ whose value is $1$, adds the corresponding sketch bucket to the sketchlet, and clears the bit to $0$ (line 7-12).
\end{itemize}



The bitmap algorithm achieves two objectives: First, it avoids adding invalid bucket to sketchlet, as an invalid bucket's corresponding bit in the bitmap is always $0$; Second, it will not select a bucket if it has not been updated since the last time it is selected into a sketchlet.

We elaborate how to place the bitmap and implement Algorithm \ref{Alg_Bitmap} in a Tofino switch in Sec. \ref{Sec_Implement}.




\subsection{Cookie Algorithm} \label{Sec_CookieAlg}

Studies show that rate distribution of real-world network flows is highly skewed, and under such a distribution, one flow may grow much faster than another \cite{BensonDCTraffic10}\cite{ChenDCTraffic11}. To cope with such a skewness, we propose another algorithm namely \emph{Cookie algorithm}, and present it in Algorithm \ref{Alg_Cookie}.
The algorithm maintains in programmable switch a counter array named \emph{Cookie}, which has same logical structure as the sketch with $d$ rows and $w$ columns of cells. Each Cookie cell is a $c$-bit counter, and all the counters are initialized as $0$.

\begin{algorithm}\label{Alg_Cookie}
\SetKwFunction{algo}{}
\SetKwInOut{Input}{Input}\SetKwInOut{Output}{Output}
\small{
\Input{A packet of flow $f$}
    \If{{\upshape $f$ is an ordinary flow}}{
        \For{$i=1\cdots d$}{
            Update $\mathbf{A}[i][h_i(f)]$\;
            $\mathbf{C}[i][h_i(f)] \leftarrow \mathbf{C}[i][h_i(f)]+1$\;
        }
    }
    \If{{\upshape $f$ is the INT flow}}{
        $PktCnt++$; Randomly select $addr$ from $\{1,\cdots ,w\}$\;
        \For{$i=1\cdots d$}{
            \For{$j=0\cdots 2^r-1$}{
                \If{$\mathbf{C}[i][addr+j] \geq (2^h-1)$}{
                    $CellCnt++$; break\;
                }
            }
            $\mathbf{C}[i][addr+j]\leftarrow \frac{\mathbf{C}[i][addr+j]}{2^s}$\;
            Add $\mathbf{A}[i][addr+j]$ to scatter sketchlet\;
        }
    }
}
\caption{\emph{Cookie} algorithm}
\end{algorithm}

Similar to the bitmap algorithm, the Cookie algorithm updates the Cookie structure $\mathbf{C}$ on two events:
\begin{itemize}
  \item When the switch receives a packet of flow $f$, in addition to update the sketch buckets $\mathbf{A}[i][h_i(f)]$, the switch also increments the counters $\mathbf{C}[i][h_i(f)]$ by 1, for $i = 1,\cdots, d$ (line 1-4).
  \item When receiving an INT flow packet, the switch randomly selects $addr$ from $\{1,\cdots, w\}$ (line 6), and for each row in the Cookie, it compares each cell in the range $[addr, (addr+2^r-1)]$ against a threshold $(2^h-1)$ (where $h\leq b$) one by one. For the first cell that is not smaller than $(2^h-1)$, its corresponding sketch bucket is selected into the sketchlet, and the cell's value is reduced by being right-shifted $s$ bits (line 7-12).
\end{itemize}

The switch maintains two counters in the pipeline, $CellCnt$ and $PktCnt$. $CellCnt$ records the number of the sketch buckets selected into sketchlets (line 10), and $PktCnt$ is the number of the INT packets it has received (line 6). Periodically, the switch computes a ratio $\frac{CellCnt}{d\times PktCnt}$: if the ratio is below a threshold $\alpha$, the switch decreases the parameter $h$ by one as $h=h-1$, which means that the algorithm is less selective in adding buckets to sketchlets; and if the ratio is larger than another threshold $\beta$, the algorithm behaves more selective with $h=h+1$.


For the Cookie algorithm, we have the following result.
\begin{restatable}{theorem}{Proportion} \label{Thm_Proportion}
The frequency of a sketch bucket being selected into sketchlets is statistically proportional to the bucket's update frequency.
\end{restatable}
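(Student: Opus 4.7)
The plan is a steady-state balance analysis of the Cookie counter dynamics. Let $\lambda_{i,j}$ denote the rate at which packets update bucket $\mathbf{A}[i][j]$ (equivalently, the rate at which $\mathbf{C}[i][j]$ is incremented in line~4 of Algorithm~\ref{Alg_Cookie}), and let $\mu_{i,j}$ denote the rate at which the same bucket is chosen into a sketchlet (line~11). The goal is to show that $\mu_{i,j}/\lambda_{i,j}$ is the same constant for every pair $(i,j)$, up to statistical fluctuation.

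First I would isolate a single cell $(i,j)$ and describe the two processes acting on $\mathbf{C}[i][j]$: unit increments at rate $\lambda_{i,j}$, and the right-shift $\mathbf{C}[i][j]\leftarrow \mathbf{C}[i][j]/2^{s}$ that occurs whenever the cell is selected. Because a cell can only be selected when its value has reached the threshold $2^{h}-1$, immediately after a selection its value drops to roughly $(2^{h}-1)/2^{s}$ and must climb back to $2^{h}-1$ before becoming eligible again. Over a long horizon $T$, the total amount added to the cell is approximately $\lambda_{i,j}T$, while the total amount removed is approximately $\mu_{i,j}T \cdot (2^{h}-1)(1-2^{-s})$; in steady state these must balance, giving
\[
\mu_{i,j} \;\approx\; \frac{\lambda_{i,j}}{(2^{h}-1)(1-2^{-s})},
\]
which is proportional to $\lambda_{i,j}$ with a constant that depends only on the global parameters $h$ and $s$.

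Next I would address the subtleties introduced by the scanning rule in lines~7--10. The INT packet picks $addr$ uniformly from $\{1,\dots,w\}$ and then, in each row, selects the first cell in the window $[addr,addr+2^{r}-1]$ whose counter has crossed the threshold, so a cell may be preempted by an earlier eligible cell in its window. Averaging over the uniform choice of $addr$, every cell appears at each of the $2^{r}$ window positions with equal probability, so the preemption discount applies symmetrically across cells of the same row and, under uniform hashing, reduces $\mu_{i,j}$ by a common factor. Finally, the adaptive tuning of $h$ via the ratio $CellCnt/(d\cdot PktCnt)$ against $\alpha,\beta$ is global, so it merely rescales the shared proportionality constant rather than breaking the relation $\mu_{i,j}\propto \lambda_{i,j}$.

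The main obstacle I anticipate is making the preemption argument airtight, because whether ``cell $j$ is the first eligible cell in its window'' is an event that couples neighboring cells whose update rates may differ from $\lambda_{i,j}$. A clean argument probably needs a short calculation showing that, conditional on $\mathbf{C}[i][j]\geq 2^{h}-1$, the preemption probability depends only on the row-averaged fraction of eligible cells, not on $\lambda_{i,j}$ itself. This factorization is exactly why the theorem is stated as ``statistically'' proportional rather than as an exact identity.
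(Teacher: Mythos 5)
Your proposal is correct and follows essentially the same route as the paper's proof: a steady-state balance between the rate at which a Cookie cell is incremented and the amount removed each time it is selected, giving a selection frequency of the form $f_i / \bigl(K(1-2^{-s})\bigr)$ with a recharge amount that is a global constant, hence proportionality. The paper writes that constant in terms of an averaged pre-reduction counter value $K$ rather than the threshold $2^{h}-1$, and it does not address the window-preemption or adaptive-$h$ subtleties at all, so your treatment is if anything the more careful of the two.
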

\begin{proof}
Consider a set of $n$ sketch buckets, whose update frequencies are $f_{1},\cdots ,f_{n}$. Suppose averagely in every $t$ seconds, a sketch bucket is selected into a sketchlet, and its corresponding cell counter value is reduced to $\frac{1}{2^{s}}$. Let $K$ be the averaged counter value before a Cookie cell is reduced, then to reach an equilibrium, we have
\[
t\times \sum_{i=1}^{n}f_{i}=K\left( 1-\frac{1}{2^{s}}\right)
\]%

For bucket $i$, averagely each time its associated Cookie cell is increased from $K \times \frac{1}{2^{s}}$ to $K$, which takes an interval of
\[
\frac{K\times (1-\frac{1}{2^{s}})}{f_{i}}
\]
it is selected into a sketchlet, therefore bucket $i$'s selection frequency is
\[
\frac{f_{i}}{K\times (1-\frac{1}{2^{s}})}
\]%
which is proportional to its update frequency $f_{i}$.

\end{proof}

Theorem \ref{Thm_Proportion} indicates that the Cookie algorithm can accurately identify the frequently updated sketch buckets, and add them to sketchlets. Obviously, this is a desired property in network measurement, especially when network flows follow a highly skewed rate distribution.

We elaborate how to place the Cookie structure and implement Algorithm \ref{Alg_Cookie} in a Tofino switch in Sec. \ref{Sec_Implement}.



\subsection{Software Switch Algorithm}

Both Algorithm \ref{Alg_Bitmap} and Algorithm \ref{Alg_Cookie} are designed for RMT programmable switches.
In the following, we present an algorithm for software switch in Algorithm \ref{Alg_Software}. We propose the algorithm for two reasons: First, the algorithm can work on software switches such as OVS \cite{PfaffOVS15}; Second, by exhaustively searching sketch buckets that contain valuable measurement data, the algorithm can provide a benchmark for comparison.

\begin{algorithm}\label{Alg_Software}
\SetKwFunction{algo}{}
\SetKwInOut{Input}{Input}\SetKwInOut{Output}{Output}
\small{
\textbf{Repeat}:\\
    $addr=NULL$, $offset[1\cdots d]=NULL$\;
    \For {$i=1 \cdots w$}{
        \If{$(addr!=NULL) \&\& (offset[1\cdots d]!= NULL)$}{
            Add $(addr, offset[1\cdots d])$ to FIFO\;
            \For {$j=1$ to $d$}{
                $\mathbf{C}[j][addr+offset[j]]=\frac{C[j][addr+offset[j]]}{2^s}$\;
            }
            $addr=NULL$; $offset[1\cdots d]=NULL$\;
        }
        \ElseIf{$(addr!=NULL) \&\& (i-addr\geq 2^r)$}{
            Find $o=\min\{offset[1], \cdots, offset[d]\}$\;
            $addr=addr+o$\;
            \For{$j=1$ to $d$}{
                $offset[j]=offset[j]-o$\;
                \If{$offset[j]<0$}{
                    $offset[j]=NULL$\;
                }
            }
        }
        \For{$j=1$ to $d$}{
            \If{$\mathbf{C}[i][j] \geq (2^h-1)$}{
                \If{$addr==NULL$}{
                    $addr=i$; $offset[j]=i-addr$\;
                }
                \ElseIf{$(i<(addr+2^r)) \&\& (offset[j]==NULL)$}{
                    $offset[j]=i-addr$\;
                }
            }
        }
    }
}
\caption{Software switch algorithm}
\end{algorithm}

Algorithm \ref{Alg_Software} employs a Cookie structure as in Sec. \ref{Sec_CookieAlg} for tracing ``freshness'' of the data in sketch buckets. But unlike the bitmap and Cookie algorithms in which bucket selection is driven by packet reception, the algorithm proactively scans the Cookie structure (line 1-3), and uses an \emph{FIFO queue} to keep the addresses of the candidate sketchlets.
More specifically, the algorithm keeps the addresses of the sketch buckets it currently selects in a tuple $(addr, offset[1\cdots d])$.
From each row of the Cookie, if the algorithm has found a Cookie cell in the range of $[addr, (addr+2^r-1)]$ whose value is no smaller than a threshold $(2^h-1)$ (where $1\leq h\leq b$), the algorithm records the cell's address in the tuple (line 16-21). When all the $d$ sketch buckets have been successfully selected, the algorithm reduces their Cookie cells by right-shifting the counters $s$ bits, and adds the address tuple to the FIFO queue (line 4-8).
If in some of the $d$ rows, no bucket can be found in the range $[addr, (addr+2^r-1)]$, the algorithm moves forward $addr$, clears some offsets whose positions are behind the new $addr$ so as to unselect the corresponding sketch buckets, and continue to search from the new $addr$ in the Cookie structure  (line 9-15).

Each time a switch receives an INT flow packet, it removes an address tuple from the FIFO queue, and adds the corresponding sketch buckets pointed by the address tuple to the sketchlet. The software switch monitors the size of the FIFO queue, and compares it with two thresholds, $\phi$ and $\varphi$. When the queue size is smaller than $\phi$, the switch decreases the algorithm parameter $h$ by one as $h=h-1$, and if the queue size is larger than $\varphi$, the switch increases $h=h+1$ to pick the sketch buckets more selectively.

Clearly, Theorem \ref{Thm_Proportion} also applies to Algorithm \ref{Alg_Software}, as the algorithm updates the Cookie cells in a same way as in Algorithm \ref{Alg_Cookie}.

\section{Prototype Implementation}\label{Sec_Implement}

We have implemented a prototype of the DUNE system, and in particular, we have realized the bitmap and the Cookie algorithms on Edgecore Wedge 100BF Tofino-based programmable switches.

For implementing a DUNE switch, two components need to be realized: 1) the sketch structure and the action to access the sketch buckets; 2) the bitmap/Cookie and the action to access the bits/Cookie cells.
For realizing the switch sketch, we follows the method in LightGuardian \cite{ZhaoLightGuardian21} and implement a SuMax sketch. The sketch has $d=2$ rows and $w=2^{15}$ columns of buckets, the size of a sketch bucket is $c=64$ bits, and a sketch row is implemented as a \SI{256}{\kilo\byte}-register. We realize the operation for updating and retrieving a sketch bucket in one single register action.


Although having same logical structure, however, we can not use the same method to implement a bitmap or a Cookie, because of the following reason: Recall that in both Algorithm \ref{Alg_Bitmap} and Algorithm \ref{Alg_Cookie}, a sketch bucket is selected from a range $[addr, (addr+2^r-1)]$, and in the worst case, as many as $2^r$ bits in the bitmap or cells in the Cookie need to be inspected. If a row of bitmap/Cookie is implemented as one register, under the Tofino switch's register access rule, the operations for inspecting $2^r$ consecutive bits or Cookie cells must be realized in one single register action. Unfortunately, the current P4 Tofino switch only allows simple operations in a register action, and it is prohibitive to
inspect $2^r$ consecutive bits or Cookie cells within one single register action.


\begin{figure}
  \centering
  \includegraphics[width=3.2in]{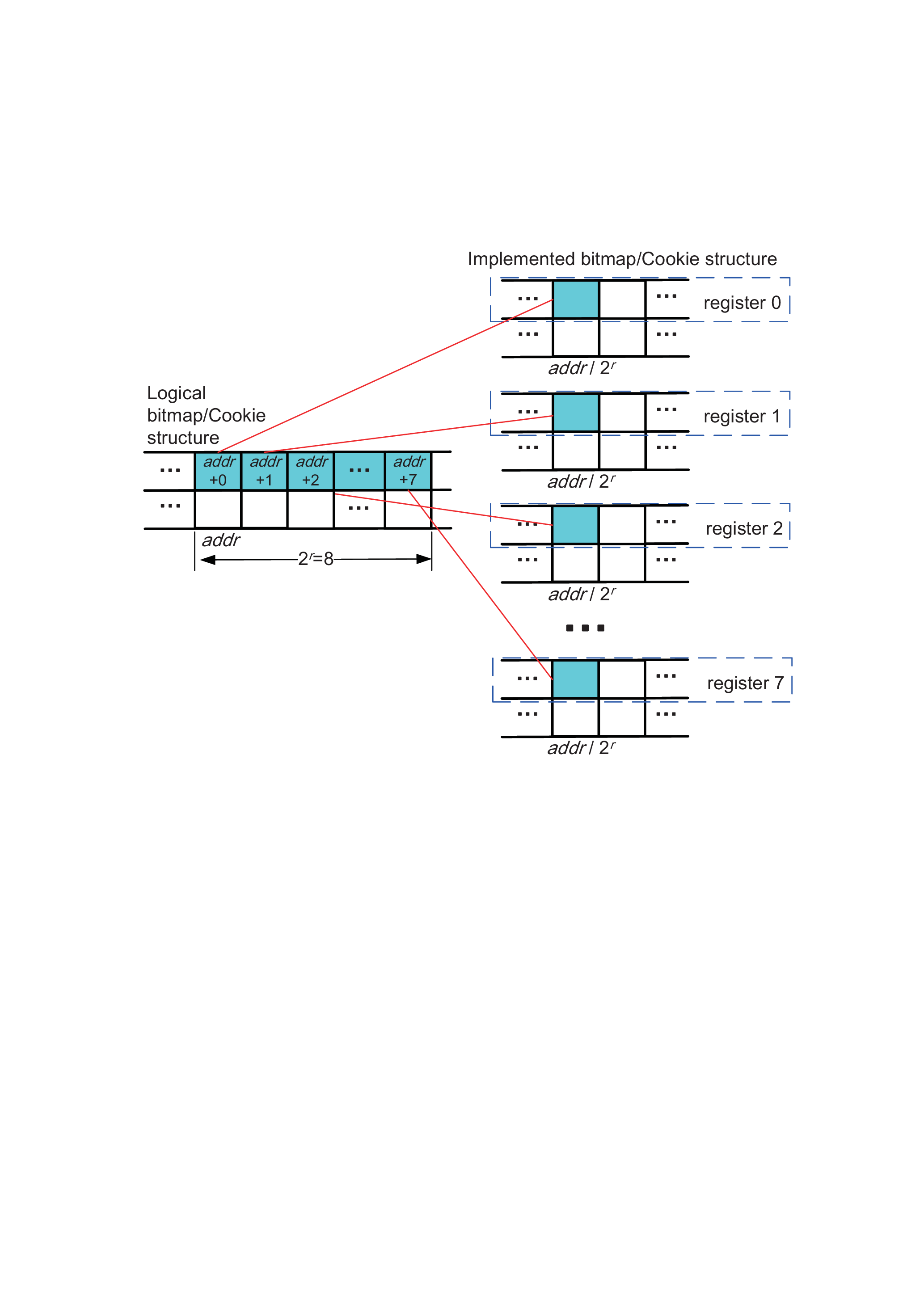}\\
  \caption{Implementation of bitmap/Cookie structure on Tofino switch.}
  \label{Fig_Implementation}
\end{figure}


To overcome the problem, in our implementation, we realize a bitmap/Cookie row with $2^r$ registers, where each register contains $\frac{w}{2^r}$ bits/Cookie cells. As shown in Fig. \ref{Fig_Implementation}, for searching in $2^r$ consecutive bits/Cookie cells in $[addr, (addr+2^r-1)]$, we actually access the bits/Cookie cells indexed at $\frac{addr}{2^r}$ in all the $2^r$ registers one by one\footnote{In our implementation, $addr$ is randomly selected as multiples of $2^r$.}. For a bit/Cookie cell at $\frac{addr}{2^r}$ in each register, we check (and update) its value with one single register action.
If a bit/Cookie cell in the $j^{th}$ register is selected, we add the bucket indexed at $(addr+j)$ from the sketch to the sketchlet.


Due to the limitations of the Tofino switch, we set $r=3$ in our implementation, therefore use $8$ registers to implement a row of bitmap or Cookie.
We have shared the our P4 code to the community\footnote{https://github.com/DuneHPCC724/Dune}.





\section{Evaluation}\label{Sec_Eval}

\begin{figure*}[tbp]
  \centering
  \subfigure[Cardinality]{\includegraphics[height=1.33in]{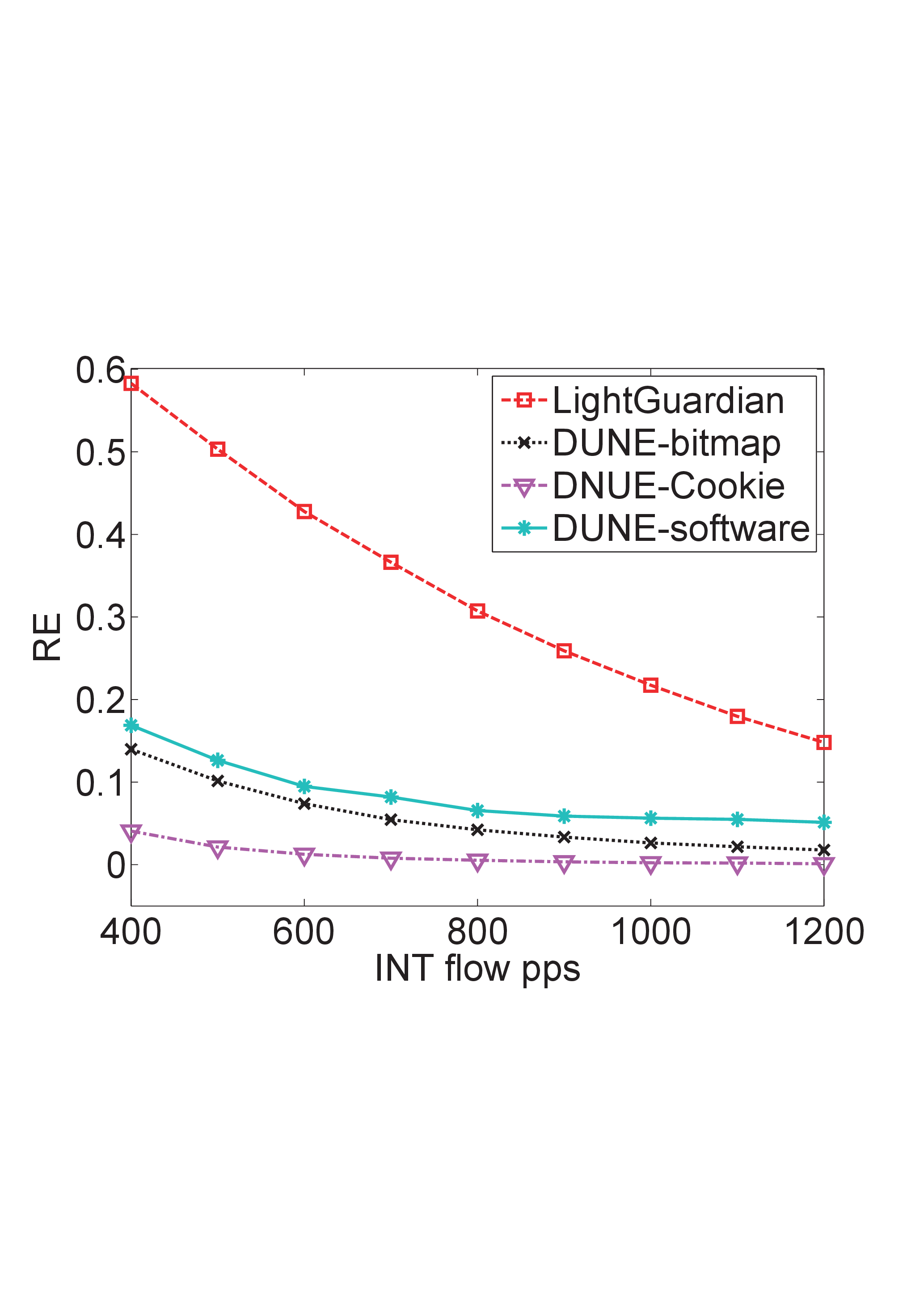}}
  \subfigure[Heavy hitter]{\includegraphics[height=1.33in]{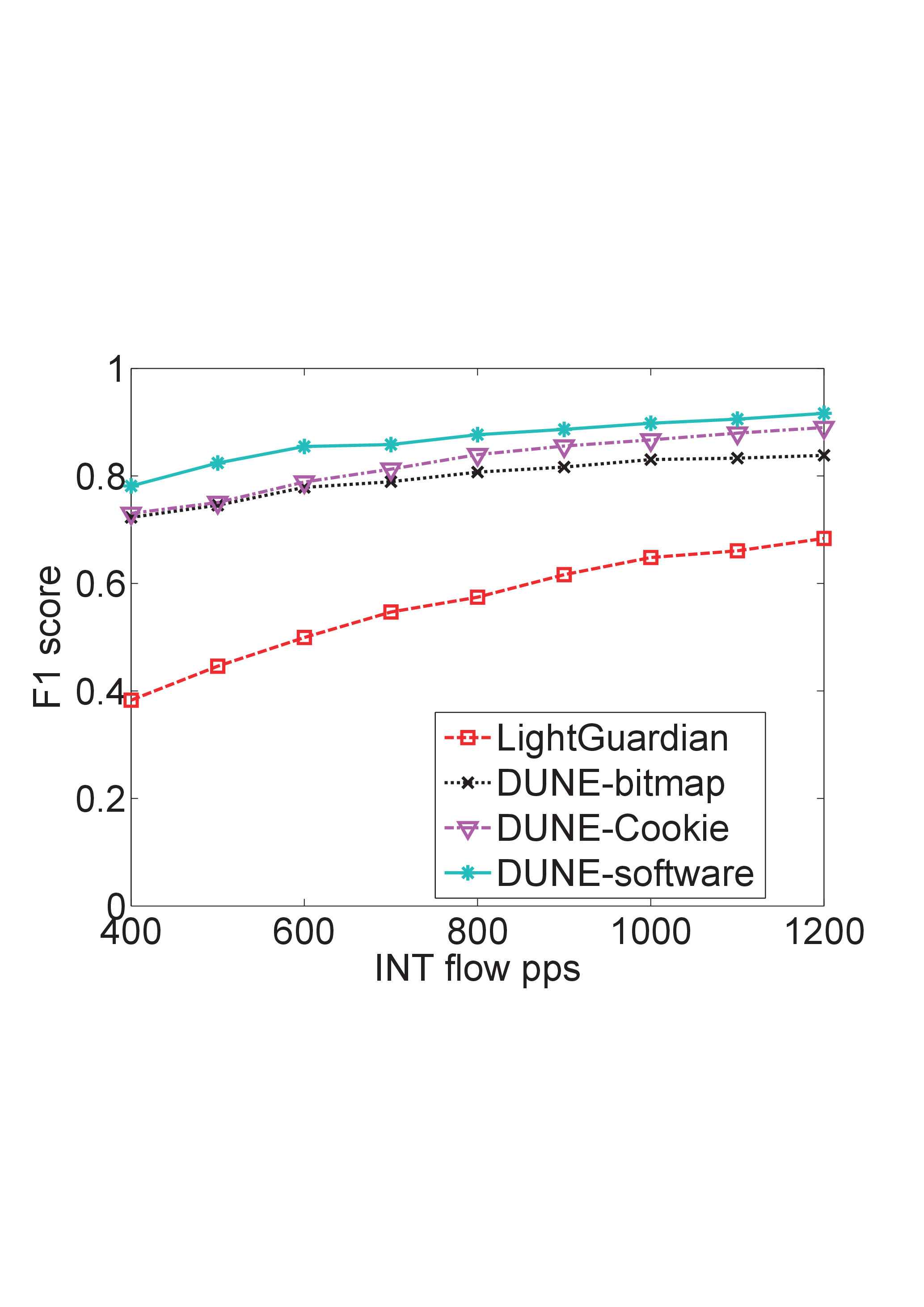}}
  \subfigure[Flow size distribution]{\includegraphics[height=1.33in]{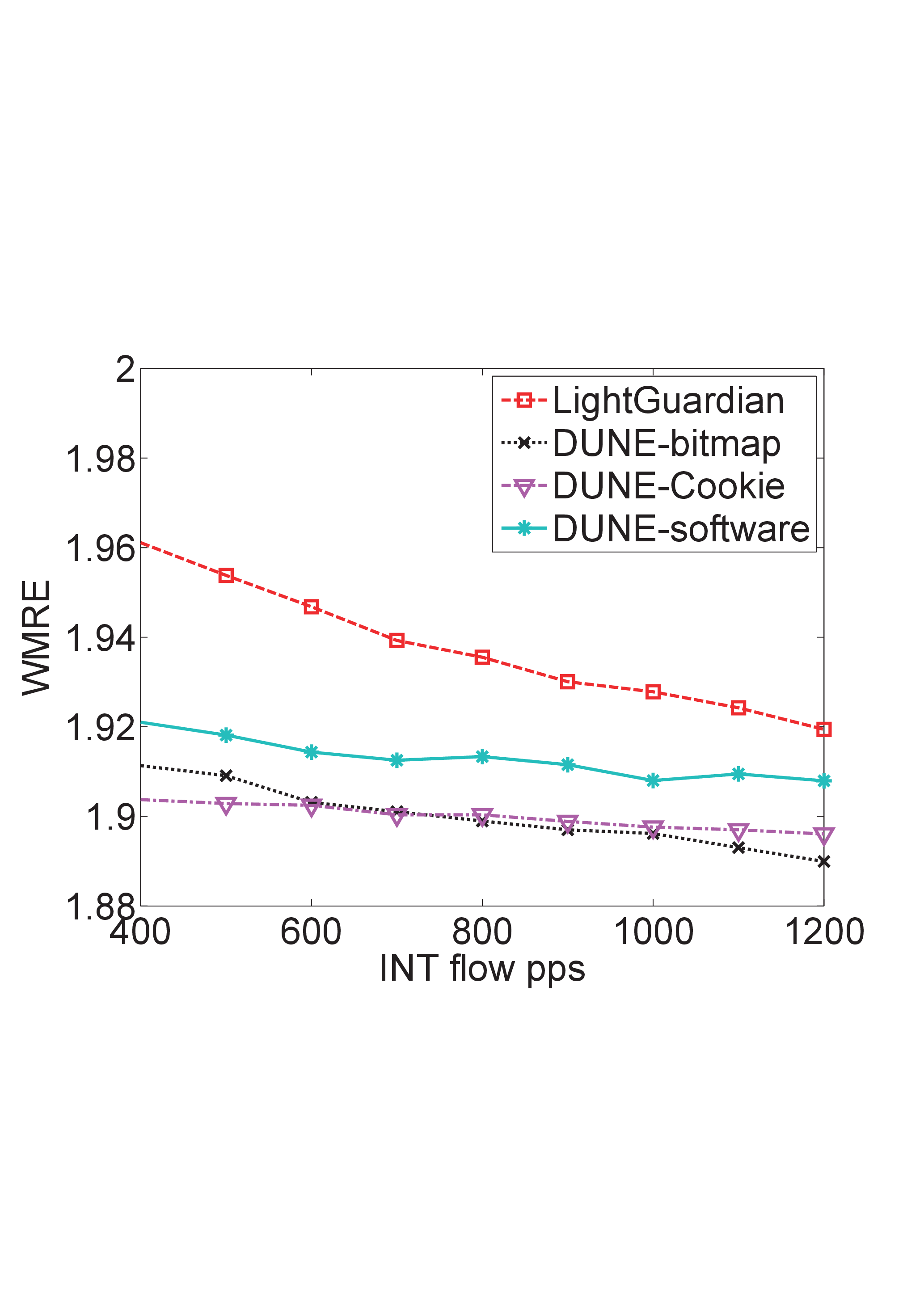}}
  \subfigure[Entropy]{\includegraphics[height=1.33in]{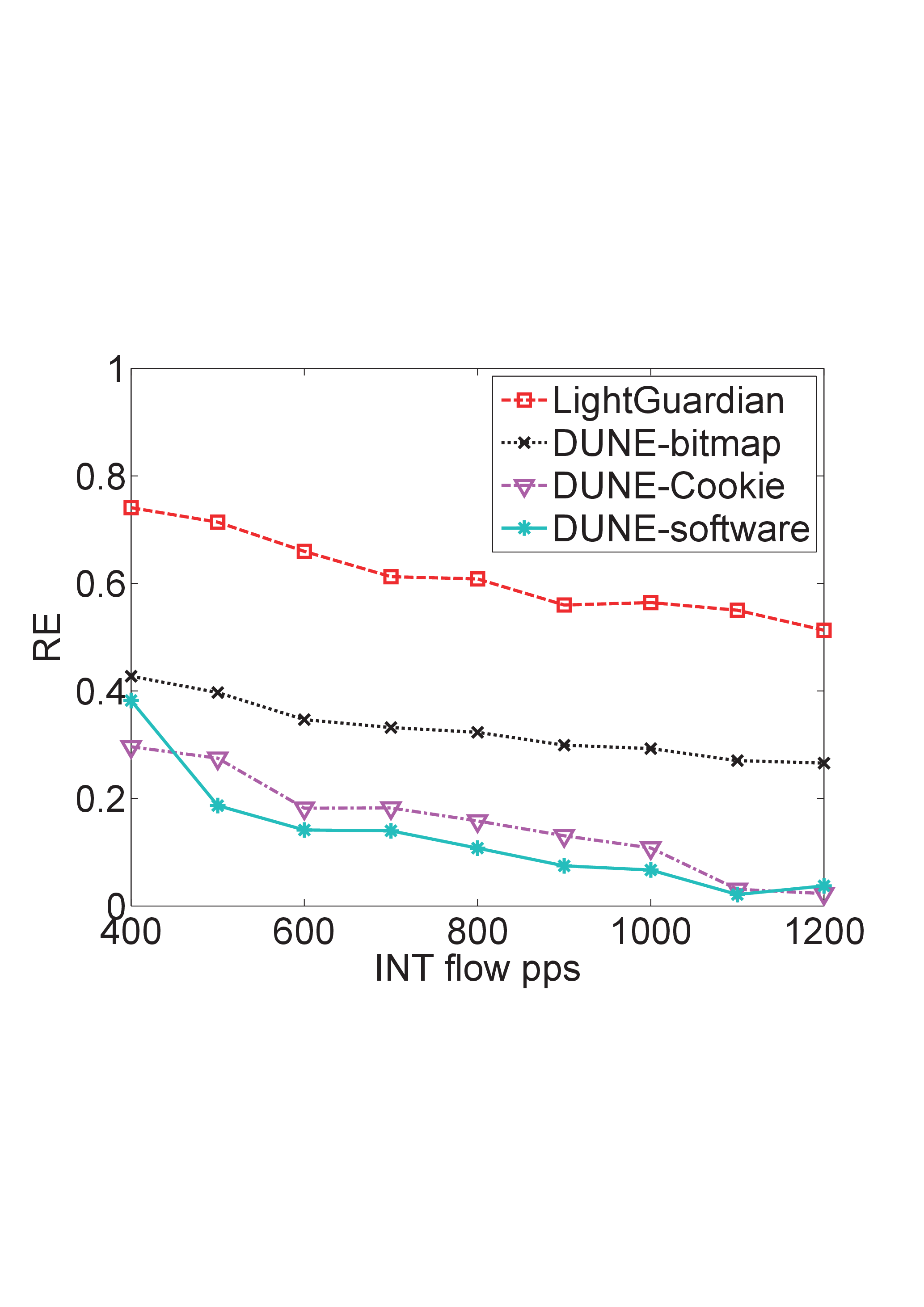}}
  \caption{Accuracies of (a) cardinality estimations in RE, (b) heavy hitter detections in F1-score, (c) flow size distribution estimations in WMRE, and (d) entropy estimations in RE with LightGuardian, DUNE-bitmap, DUNE-Cookie, and DUNE-software under various INT flow pps.}\label{Fig_TaskEval}
\end{figure*}

We conduct extensive experiments to evaluate DUNE, and in particular, we examine the four sketch-INT systems as the following.
\begin{itemize}
  \item \textbf{DUNE-bitmap}: In DUNE-bitmap, we employ the scatter sketchlet as described in Sec. \ref{Sec_Scatter}, and use the bitmap algorithm in Algorithm \ref{Alg_Bitmap} to select sketch buckets to send to end-host.
  \item \textbf{DUNE-Cookie}: The DUNE-Cookie system employs the scatter sketchlet and applies the Cookie algorithm in Algorithm \ref{Alg_Cookie} to select sketch buckets to sketchlets.
  \item \textbf{DUNE-software}: This solution differs from DUNE-Cookie in that it uses the software switch algorithm in Algorithm \ref{Alg_Software} to select sketch buckets.
  \item \textbf{LightGuardian}\cite{ZhaoLightGuardian21}: As a representative sketch-INT system, LightGuardian adopts the column sketchlet, and employs an algorithm named \emph{$k$+chance} to select the sketch columns. In the $k$+chance algorithm, a programmable sketch maintains $k$ bit arrays, each containing $w$ bits. When a column index $addr$ is randomly selected, the switch sequentially inspects the bits indexed at $addr$ in each array: If an 0-bit is encountered, the switch sets the bit as $1$, and adds the sketch column at $addr$ to the sketchlet; If all the bits have already been set as $1$, the switch randomly selects an other column. Ideally with the $k$+chance algorithm, all the columns will be sent to end-host with a fair chance.
\end{itemize}

We have implemented DUNE-bitmap and DUNE-Cookie with Tofino switches. We also implement DUNE-bitmap, DUNE-Cookie, and LightGuardian on \texttt{bmv2} \cite{BMV2}, which is a P4-programmable software switch. We emulate DUNE-software with a standalone software switch, as its bucket selection algorithm is proactive, which can not be realized within an RMT pipeline.

Unless otherwise specified, in the following experiments, we set the sketch/bitmap/Cookie size as $d=2$ rows and $w=2^{15}$ columns. A sketch bucket contains $c=64$ bits, the size of a Cookie cell is $b=8$ bits, and the length of a scatter sketchlet's offset is $r=6$ bits. For the DUNE-Cookie system, we set the two threshold parameters as $\alpha=0.5$ and $\beta=1.0$, and for DUNE-software, we set $\phi=50$ and $\varphi=100$. In both DUNE-Cookie and DUNE-software, we set $s=1$, which means that a Cookie cell's value is halved each time the associated sketch bucket gets selected.
For evaluating LightGuardian, we employ $k=8$ bit arrays, which means that the bit arrays used by the $k$+chance algorithm consume four times switch memory comparing with the bitmap, or half of the memory comparing with the Cookie structure.

With the above parameter settings, the size of a column sketchlet should be at least $d\times c+\log_{2}w=143$ bits, and the minimum size of a scatter sketchlet should be $d\times c+\log_{2}w+d\times r=155$ bits. We can see that DUNE is lightweight by increasing LightGuardian's INT overhead no more than $8.4\%$ in our experiments.

We use the public available MAWI packet trace \cite{MAWI} captured from the WIDE backbone to drive the experiments. The trace contains $9.6$M flows, and we randomly select $6,000$ flows from the top-$50$K largest flows for each experiment.

\subsection{Measurement Accuracy}\label{Sec_TaskEval}


\subsubsection{Measurement tasks and metrics}

We first conduct a number of network measurement tasks with the four sketch-INT systems. The tasks are:
\begin{itemize}
    \item \textbf{Cardinality estimation}. In this task, we count number of the distinct flows appear in the end-host reconstructed sketch to estimate the traffic cardinality.
    \item  \textbf{Heavy hitter detection}. This task aims to identify the top 10-\% largest flows with the reconstructed sketch at the end-host.
    \item \textbf{Flow size distribution estimation}. This task aims to estimate $m_i$, the number of the flows of size $i$ for all the possible sizes with the reconstructed sketch.
    \item \textbf{Entropy estimation}. This task estimates the entropy of the flows, which is defined as
   \[
   Entropy=\sum_{i}\left( i\times \frac{m_{i}}{M}\times \log \frac{m_{i}}{M}\right)
   \]
   where $m_i$ is the number of flows of size $i$ and $M=\sum_{i}m_i$, with the reconstructed sketch.
\end{itemize}

We use the following metrics to evaluate the measurement accuracies.
\begin{itemize}
\item \textbf{Relative error (RE)}: We use the relative error, which is defined as
\[
RE=\frac{\left\vert Estimated-Truth\right\vert }{Truth}
\]
to evaluate the cardinality and entropy estimations' accuracies.
\item  \textbf{F1-score}. For detecting heavy hitters, we use the F1-score to evaluate the accuracy.
\item \textbf{Weighted mean relative error (WMRE)}: For comparing the estimated flow size distribution with the ground truth, we compute WMRE as
    \[
    WMRE=\frac{\sum_{i=1}\left\vert m_{i}-\hat{m}_{i}\right\vert }{\sum_{i=1}\left( \frac{m_{i}+\hat{m}_{i}}{2}\right) }
    \]
    where $m_i$ and $\hat{m}_{i}$ are the estimated and ground-truth numbers of the flows of size $i$.
\end{itemize}

\subsubsection{Results}

We conduct the measurement tasks with the four sketch-INT systems, and present the results in Fig. \ref{Fig_TaskEval}.
In each experiment, we vary the INT flow's packet-per-second (pps), which determines the maximum number of the sketchlets that can be transferred from the switch sketch to the end-host, from $400$ to $1,200$ in the experiments.

We make several interesting observations for Fig. \ref{Fig_TaskEval}. The first observation is that all the Sketch-INT systems have better performances as the INT flow pps increases. This is easy to understand, as a higher pps indicates that the INT flow can bring more buckets to reconstruct the sketch at the end-host.

The second observation is that our proposed systems, i.e., DUNE-bitmap, DUNE-Cookie, and DUNE-software, are more accurate in all the tasks than LightGuardian. This is also easy to understand, as with the scatter sketchlet and bucket selection algorithms, our proposed systems actually deliver more measurement data of higher qualities to the end-host than LightGuardian.

The third observation is that among our proposed systems, there is no ``silver bullet'' for all the measurement tasks: a system may have good performance in one task, but may have poor accuracy in another.
For example in the cardinality estimation as in Fig. \ref{Fig_TaskEval}(a), DUNE-Cookie achieves the best accuracy for two reasons: 1) DUNE-Cookie halves a Cookie cell after selecting the corresponding sketch bucket, thus can avoid selecting the same bucket within a short time, as it takes time for the flow to grow,
but in DUNE-bitmap, a bucket can be selected again right after it is updated just once.
2) DUNE-Cookie selects sketch bucket within a limited range of $[addr, (addr+2^r-1)]$ starting from a random $addr$, thus can avoid repeatedly selecting a small number of  buckets that are updated very frequently, but with DUNE-software, which searches sketch buckets globally and exhaustively, the algorithm may repeatedly select a few very frequently updated buckets, while ignores the others.

For the other tasks, Fig. \ref{Fig_TaskEval}(b) shows that DUNE-software has the highest F1-scores in detecting the heavy hitters, as it employs a global and exhaustive search algorithm, and DUNE-Cookie, which also traces sketch buckets' update frequencies, better detects large flows than DUNE-bitmap.

For estimating the flow size distribution as in Fig. \ref{Fig_TaskEval}(c), DUNE-bitmap outperforms the other two systems for the reason that, it uniformly selects sketch buckets of all the flows; while with the global and exhaustive searching algorithm, DUNE-software is heavily biased towards the frequently updated sketch buckets, while ignores many small flows, thus derives a distorted flow size distribution .

Finally in the entropy estimation as in Fig. \ref{Fig_TaskEval}(d), since the entropy definition is biased towards large flows, whose sketch buckets have higher chances to be selected by DUNE-Cookie and DUNE-software, DUNE-software achieves the lowest RE, while the unbiased DUNE-bitmap system has the highest error rates among our proposed systems.

In summary, the experiment results in Fig. \ref{Fig_TaskEval} confirm that our proposed sketch-INT systems outperform the existing solution, thanks to the novel design of the scatter sketchlet and the smart bucket selection algorithms. Moreover, our proposed methods have differentiated performances in various measurement tasks, suggesting that it is important to choose the right method for each measurement task.


\subsection{Decomposing Measurement Errors}

\begin{figure}[tbp]
  \centering
  \subfigure[]{\includegraphics[height=1.30in]{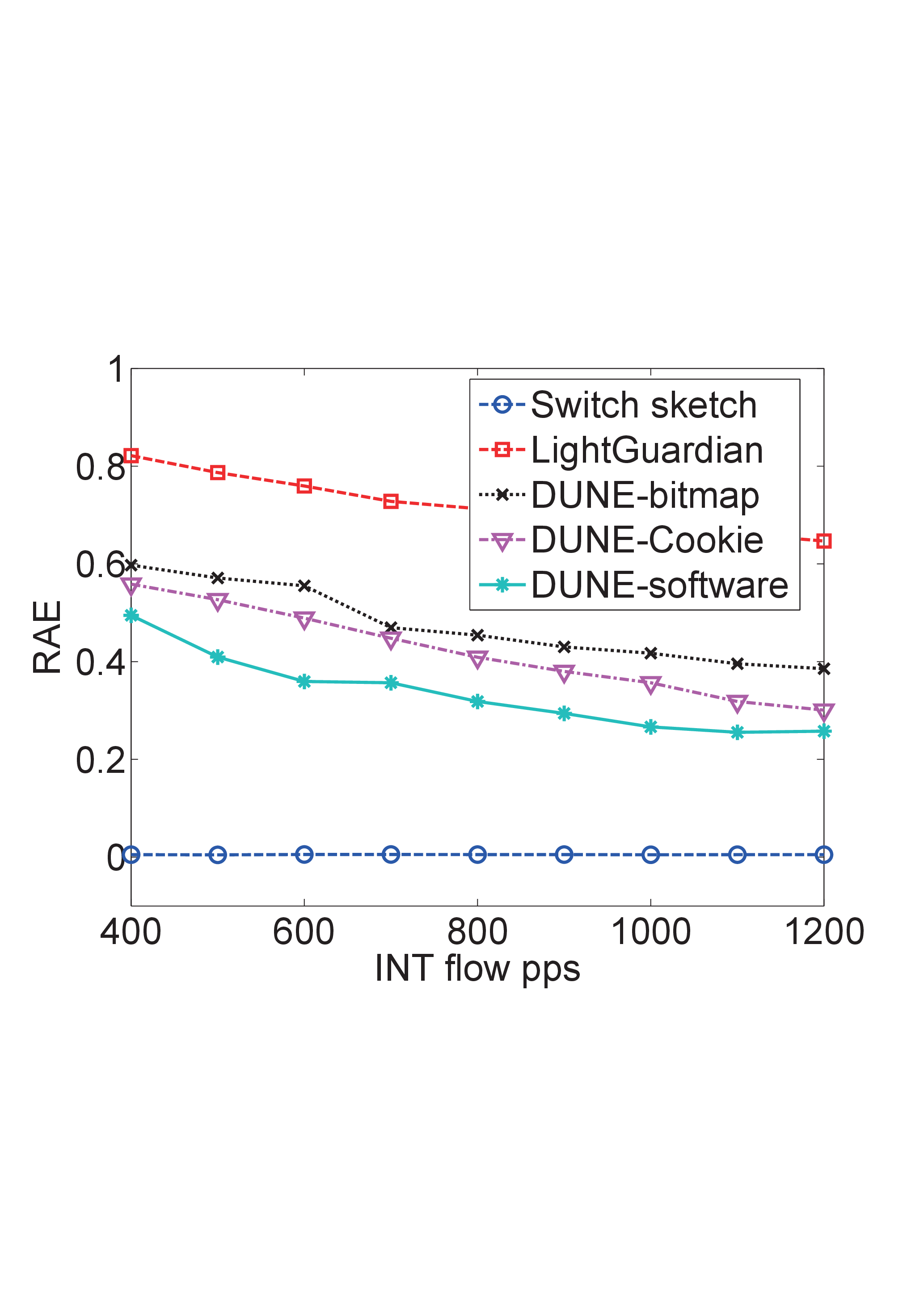}}
  \subfigure[]{\includegraphics[height=1.30in]{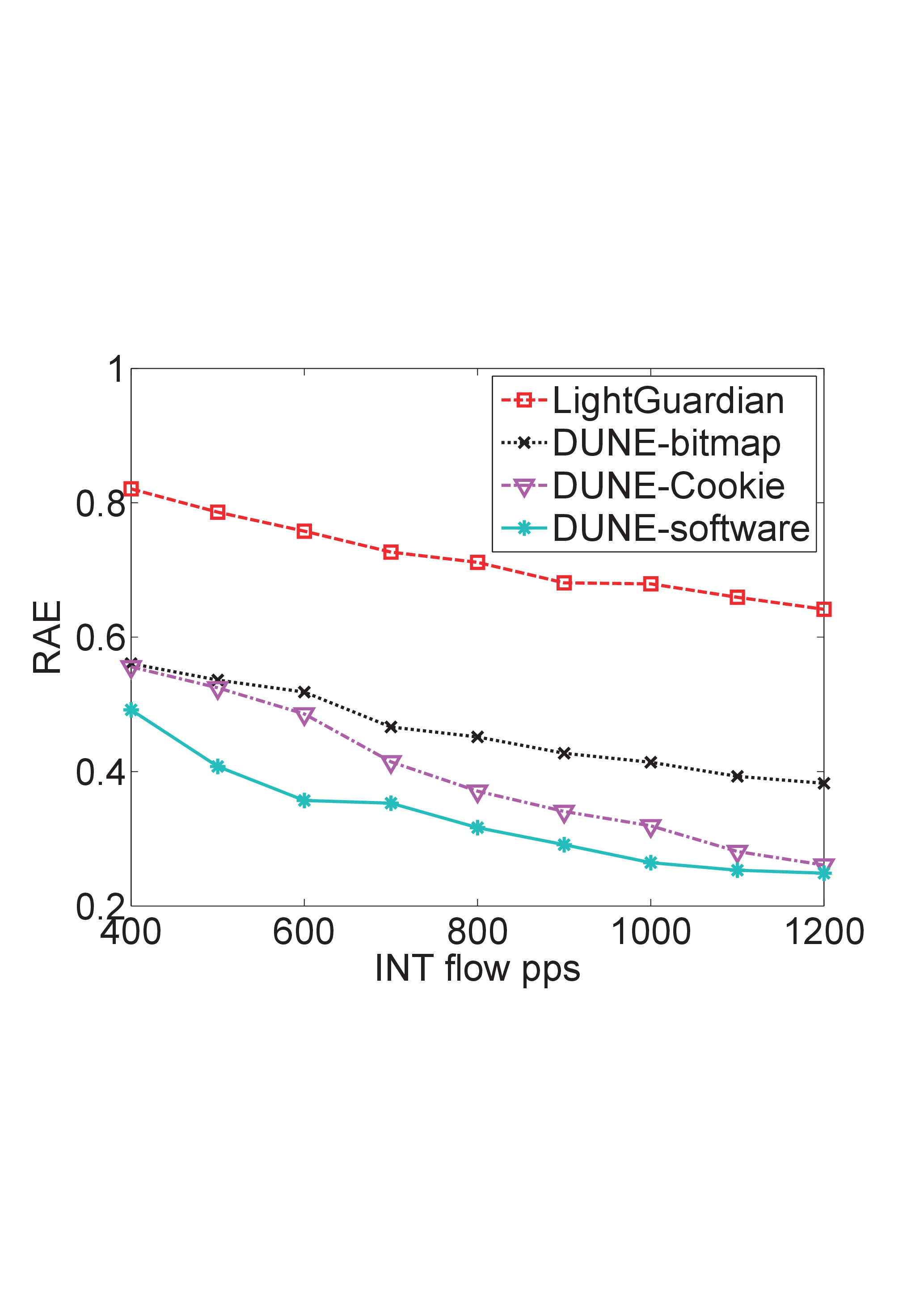}}
  \caption{(a) $RAE(\mathbf{n},\mathbf{n_{\mathbf{A}}|F})$ of switch sketch and $RAE(\mathbf{n},\mathbf{n_{\mathbf{A'}}|F})$ of reconstructed sketches by LightGuardian, DUNE-bitmap, DUNE-Cookie, and DUNE-software; (b) $RAE(\mathbf{n_{\mathbf{A}}},\mathbf{n_{\mathbf{A'}}|F})$ of reconstructed sketches by LightGuardian, DUNE-bitmap, DUNE-Cookie, and DUNE-software.}\label{Fig_RAE}
\end{figure}

The evaluation results in Fig. \ref{Fig_TaskEval} suggests that errors exist in the reconstructed sketch at end-host. An error may be caused by two different reasons: First, the error is caused by hash collision in the switch sketch, and the erroneous data is transferred to the end-host by sketchlets. Second, the measurement data in the switch sketch is error-free, but the reconstructed sketch at the end-host is not timely synchronized with the switch sketch, thus introduces inaccuracies because of the invalid or stale data in the sketch buckets, as we have seen in Sec. \ref{Sec_Motivation}. In the following, we seek to identify and quantify the two kinds of errors.

\subsubsection{Decomposing methods and metrics}
Before presenting our methods and metrics, we first introduce some notations. Let $\mathbf{x}=\{x_{f}|f\in \mathbf{F}\}$ be a set of measurement data on a network state (i.e., flow size) over a network flow set $\mathbf{F}$, and $\mathbf{y}=\{y_{f}|f\in \mathbf{F}\}$ be another set of measurement data on same state over a same flow set $\mathbf{F}$. We define the \emph{Relative Aggregated Error (RAE)} for comparing the measurement data $\mathbf{y}$ against $\mathbf{x}$ as
\begin{equation}
RAE(\mathbf{x},\mathbf{y|F})=\frac{\sum_{f\in \mathbf{F}}|x_{f}-y_{f}|}{\sum_{f\in \mathbf{F}}x_{f}}
\end{equation}

We use the following metrics to quantify the errors from different sources.
\begin{itemize}
  \item $RAE(\mathbf{n},\mathbf{n_{\mathbf{A}}|F})$: It is the RAE for comparing the measurement data in the switch sketch  against the ground truth, where $\mathbf{n}=\{n_{f}|f\in \mathbf{F}\}$ is the set of the ground truth flow size, and $\mathbf{n_{\mathbf{A}}}=\{n^{\mathbf{A}}_{f}|f\in \mathbf{F}\}$ is the flow sizes estimated by the switch sketch $\mathbf{A}$.
  \item $RAE(\mathbf{n},\mathbf{n_{\mathbf{A'}}|F})$: It is the RAE for comparing the measurement data in the reconstructed sketch at the end host against the ground truth, where $\mathbf{n_{\mathbf{A'}}}=\{n^{\mathbf{A'}}_{f}|f\in \mathbf{F}\}$ is the flow sizes estimated by the reconstructed sketch $\mathbf{A'}$.
  \item $RAE(\mathbf{n_{\mathbf{A}}},\mathbf{n_{\mathbf{A'}}|F})$: It is the RAE for comparing the flow sizes estimated by the reconstructed sketch against the ones estimated by the switch sketch.
\end{itemize}

From the above definition, we can see that $RAE(\mathbf{n},\mathbf{n_{\mathbf{A}}|F})$ quantifies the errors caused by hash collisions in the switch sketch, $RAE(\mathbf{n_{\mathbf{A}}},\mathbf{n_{\mathbf{A'}}|F})$ measures the errors caused by the invalid and stale bucket data in the end-host reconstructed sketch, and $RAE(\mathbf{n},\mathbf{n_{\mathbf{A'}}|F})$ captures the overall errors.

\subsubsection{Results}

We run the four sketch-INT systems to estimate the sizes of $6,000$ flows from the MAWI trace, and compare RAEs of the different systems in Fig. \ref{Fig_RAE}. In particular, we compare the flow sizes estimated by the switch sketch with the ground truth, and present $RAE(\mathbf{n},\mathbf{n_{\mathbf{A}}|F})$ denoted as ``switch sketch'' in Fig. \ref{Fig_RAE}(a); we also compare the end-host sketches reconstructed by different systems against the ground truth in $RAE(\mathbf{n},\mathbf{n_{\mathbf{A'}}|F})$ in the figure; In Fig. \ref{Fig_RAE}(b) we compare the end-host reconstructed sketches against the switch sketch, and present $RAE(\mathbf{n_{\mathbf{A}}},\mathbf{n_{\mathbf{A'}}|F})$ of the four systems.

\begin{figure}
  \centering
  \includegraphics[width=2.0in]{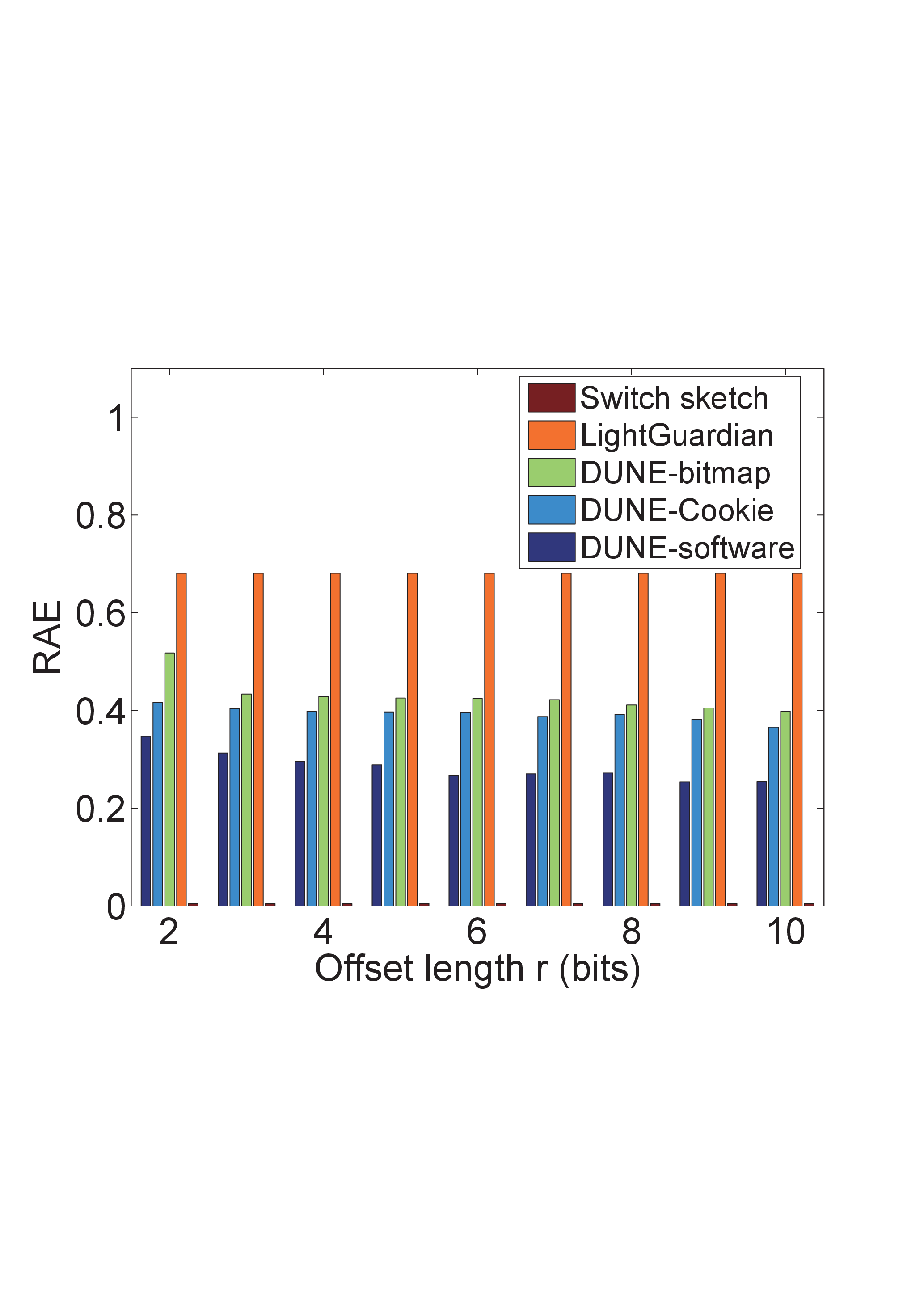}\\
  \caption{Impact of scatter sketchlet offset length $r$.}
  \label{Fig_OffsetLen}
\end{figure}

From Fig. \ref{Fig_RAE} we can make several observations. First, the reconstructed sketches at end-hosts contain much more errors comparing with the switch sketch, suggesting that most of the inaccuracies arise from the invalid and stale bucket data in the reconstructed sketch.
Second, a larger INT flow pps can considerably reduce the errors in the reconstructed sketches. For example, the values of $RAE(\mathbf{n},\mathbf{n_{\mathbf{A'}}|F})$ are reduced $21.8\%$, $31.8\%$, $53.1\%$, and $49.5\%$ in LightGuardian, DUNE-bitmap, DUNE-Cookie, and DUNE-software respectively, when pps is increased from $400$ to $1,200$.
Third, DUNE-software has the lowest RAE because of its global and exhaustive searching algorithm; while DUNE-bitmap has the highest RAE as it selects sketch buckets in an unbiased way.
The last observation is that thanks to the novel design of the scatter sketch and the smart bucket selection algorithms, our proposed systems achieve much lower error rates than LightGuardian. For example, under the $1,200$ INT flow pps, DUNE-bitmap, DUNE-Cookie, and DUNE-software reduce LightGuardian's $RAE(\mathbf{n},\mathbf{n_{\mathbf{A'}}|F})$ $40.4\%$, $59.4\%$, and $61.2\%$ respectively.

\subsection{Impact of Offset Length $r$}

The design of the scatter sketchlet enables a switch to select buckets that contain ``fresh'' measurement data in a range of $[addr, (addr+2^r-1)]$.
Intuitively, the larger the offset length $r$ is, the higher chance that a sketch bucket with ``fresh'' measurement data can be selected, and the higher estimation accuracy the reconstructed sketch can achieve.

In this experiment, we run different sketch-INT systems under various offset length $r$ ranging from $2$ to $10$ bits, and present $RAE(\mathbf{n_{\mathbf{A}}},\mathbf{n_{\mathbf{A'}}|F})$ of LightGuardian, DUNE-bitmap, DUNE-Cookie, and DUNE-software in Fig. \ref{Fig_OffsetLen}. We also plot $RAE(\mathbf{n},\mathbf{n_{\mathbf{A}}|F})$ of the switch sketch for comparison.

From the figure we can see that increasing the offset length do reduce the errors, but the reduction is not very significant. For example, for DUNE-Cookie, by increasing $r$ from $2$ to $10$ bits, the error reduction ratio is $5.6\%$. Recall that in the Tofino implementation as described in Sec. \ref{Sec_Implement}, a packet accesses $2^r$ registers for selecting a sketch bucket. The result in Fig. \ref{Fig_OffsetLen} suggests that even with a smaller offset length (and consequently, fewer registers), the systems of DUNE-bitmap and DUNE-Cookie can still have decent accuracies comparing with LightGuardian. On the other hand, with the DUNE-software system that runs on software switches, a higher accuracy can be expected by exploiting a larger offset length $r$.

\subsection{Impact of Cookie Cell Size $b$}

\begin{figure}
  \centering
  \includegraphics[width=2.0in]{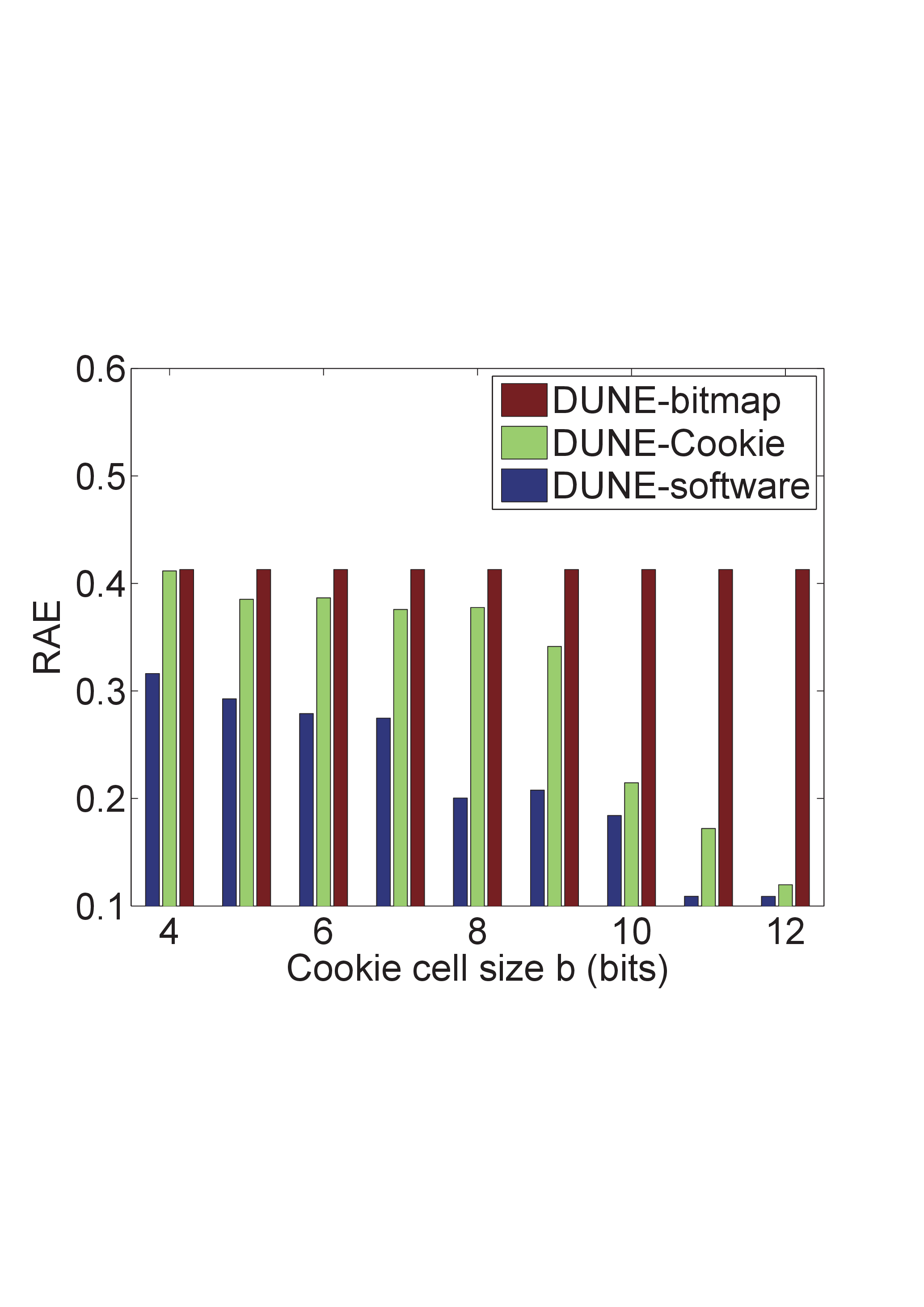}\\
  \caption{Impact of Cookie cell size $b$.}
  \label{Fig_OffsetLen}
\end{figure}

In the DUNE-Cookie or DUNE-software system, we place a Cookie data structure, which has an identical logical structure as the sketch, to trace the ``freshness'' of the measurement data in sketch buckets. In this experiment, we vary the size of a Cookie cell from $4$ to $12$ bits, and present $RAE(\mathbf{n_{\mathbf{A}}},\mathbf{n_{\mathbf{A'}}|F})$ of the DUNE-Cookie and DUNE-software systems in Fig. \ref{Fig_OffsetLen}. We also plot DUNE-bitmap's $RAE(\mathbf{n_{\mathbf{A}}},\mathbf{n_{\mathbf{A'}}|F})$ for comparison.

From Fig. \ref{Fig_OffsetLen}, we can see that by increasing the Cookie cell size, better accuracies can be achieved by the reconstructed sketches in DUNE-Cookie and DUNE-software, in particular, when $b$ exceeds $9$ bits, the errors are considerably reduced. We believe this is because with the MAWI traffic trace, the network flows that grow very fast can be accurately identified by the Cookie algorithm when the Cookie cells are capable to trace up to $512$ updates.
Our observation suggests that a tradeoff is allowed between the memory usage and the measurement accuracy: for hardware switches such as the Tofino switch that lacks memory resource, a small Cookie size can provide reasonable accuracy, while on software switches, we can pursue a higher accuracy at a cost of a larger memory usage.

\subsection{Forwarding Performance}

\begin{figure}[tbp]
  \centering
  \subfigure[]{\includegraphics[height=1.30in]{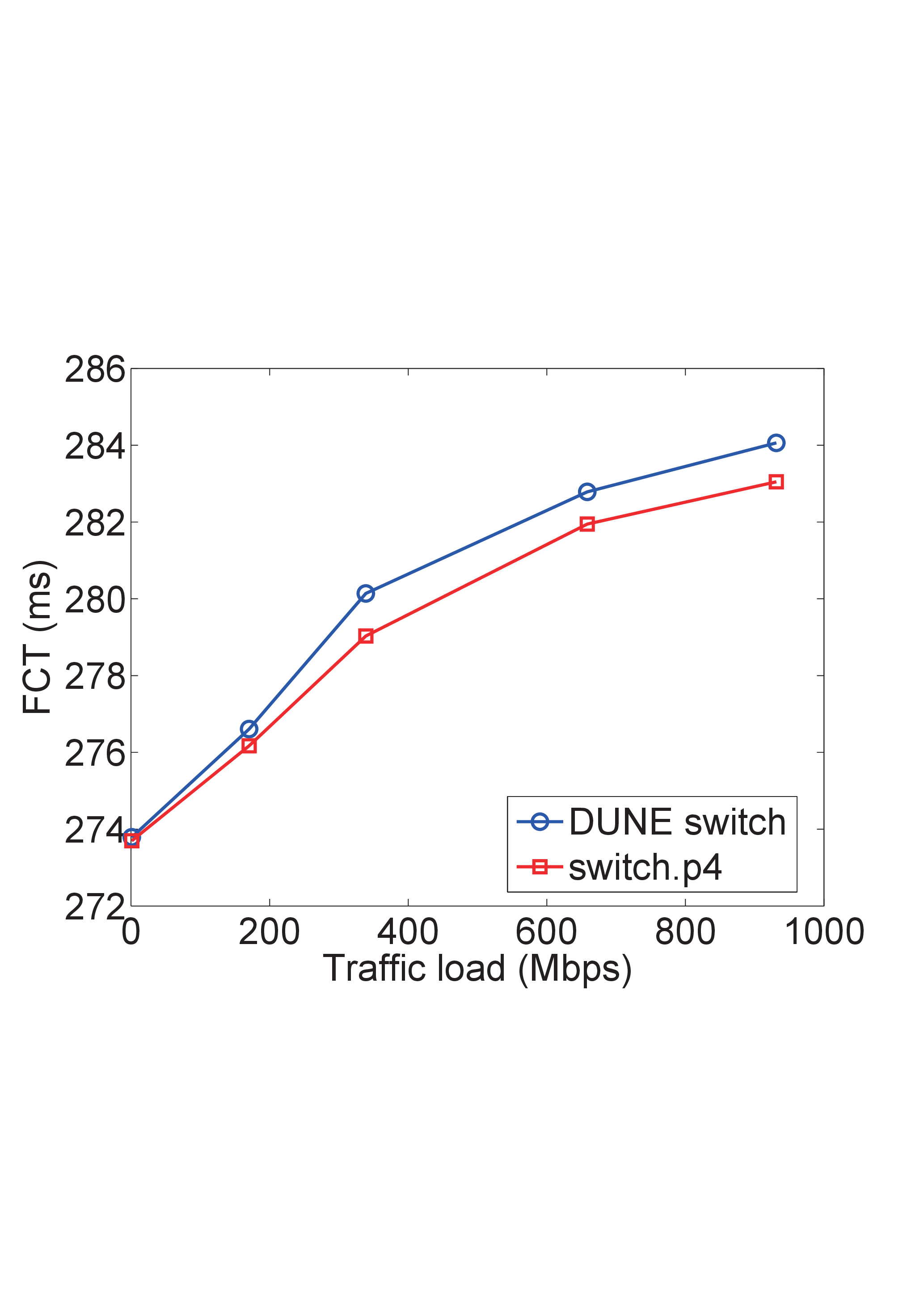}}
  \subfigure[]{\includegraphics[height=1.30in]{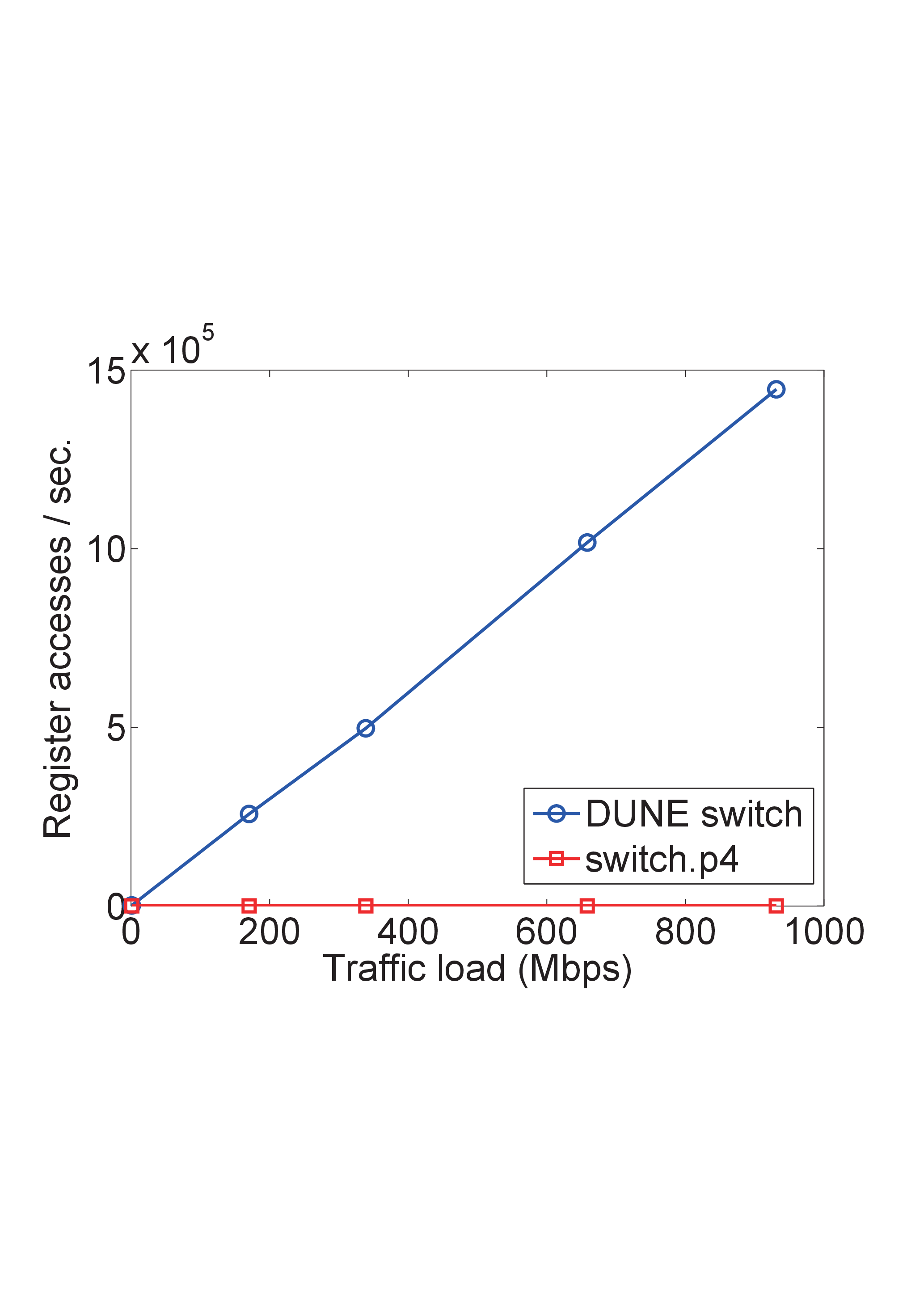}}
  \caption{(a) FCTs of 1000-packet INT flow under various traffic load forwarded by DUNE switch and baseline switch. (b) Register accesses per second in DUNE switch and baseline switch under various traffic load.}\label{Fig_Performance}
\end{figure}

We evaluate the performance of the DUNE prototype implemented on the Barefoot Tofino switch. As described in Sec. \ref{Sec_Implement}, the realized sketch/bitmap/Cookie has $d=2$ rows and $w=2^{15}$ columns. We set the offset length as $r=3$, and each bitmap/Cookie row is realized with $2^r=8$ registers.

The major difference between a DUNE switch and a conventional L2/L3 switch is that in a DUNE switch, a packet of an ordinary network flow is required to access $2\times d=4$ registers to update the sketch buckets as well as the bit/Cookie cells, and an INT flow packet needs to access $d+d\times 2^r=18$ registers to select a sketch bucket by inspecting bits/Cookie cells in a range of $[addr,(addr+2^r-1)]$. Our concern is, \emph{will the register accesses significantly impact the switch's forwarding performance?}

In our experiment, we send a traffic workload varying from \SI{1}{\mega\bit\per\second} to \SI{950}{\mega\bit\per\second} from the MAWI trace to the DUNE switch, and use an INT flow containing $1,000$ packets to carry the sketchlets to the end-host. We measure the INT flow's flow completion time (FCT) to evaluate the switch's forwarding performance, and plot the results in Fig. \ref{Fig_Performance}(a). For comparison, we also run \texttt{switch.p4} \cite{SwitchP422}, a baseline L3 switch implementation on the Tofino switch, under same traffic loads and plot the FCTs of a same $1,000$-packet flow in the figure.
Fig. \ref{Fig_Performance}(b) presents the averaged register accesses in the two switches. Note that when forwarding a packet, the baseline switch does not access any register.

From the figure we can see that the flows traversing the DUNE switch have FCTs only slightly longer comparing with the ones of the baseline switch, despite that the DUNE switch make a large number of register accesses. For example, even under the highest  traffic load in our experiment, register accesses only prolong the FCT less than $0.36\%$, and the performance can be further improved \cite{NamkungSketchLib22}.
The experiment result suggests that our proposed sketch-INT system is practical to be deployed in production networks to handle real-world network traffics.

\section{Conclusion}\label{Sec_Conclude}

In this paper, we presented \emph{DUNE}, a lightweight and accurate sketch-INT network measurement system. DUNE follows the ``reconstructing sketch at end-host'' approach in combining sketch and INT, thus is lightweight regarding network overhead. To combat the inaccuracies caused by the invalid and stale data in the buckets of the end-host reconstructed sketch, we made two innovations: First, we designed a novel \emph{scatter sketchlet} that allows a switch to select individual buckets to add to sketchlet; Second, we developed data structures for tracing ``freshness'' of the data in sketch buckets, and proposed algorithms for smartly selecting buckets to send to end-host. We theoretically proved that our proposed methods have higher efficiency in transferring measurement data, and better adapt to skewed network flows. We implemented a prototype system on P4-programmable Tofino switches under the switch's register access constraints. We extensively evaluated our proposed system with experiments driven by real-world backbone traffic, and showed with the experiment results that DUNE can significantly improve the measurement accuracies by avoiding up to $60\%$ errors, while only slightly reduce the switch's forwarding rate less than $0.36\%$.

\bibliographystyle{IEEEtran}
\bibliography{IEEEabrv, TechRep}

\end{document}